\documentclass[english]{amsart}
\usepackage[T1]{fontenc}
\usepackage[latin1]{inputenc}
\usepackage{geometry}
\geometry{verbose,tmargin=2cm,bmargin=2cm,lmargin=4cm,rmargin=4cm}
\usepackage{float}
\usepackage{mathrsfs}
\usepackage{amsmath}
\usepackage{amssymb}
\usepackage{graphicx}
\usepackage{subfigure}

\makeatletter

%%%%%%%%%%%%%%%%%%%%%%%%%%%%%% LyX specific LaTeX commands.
\floatstyle{ruled}
\newfloat{algorithm}{tbp}{loa}
\providecommand{\algorithmname}{Algorithm}
\floatname{algorithm}{\protect\algorithmname}

%%%%%%%%%%%%%%%%%%%%%%%%%%%%%% User specified LaTeX commands.
%%%%%%%%%%%%%%%%%%%%%%%%%%%%%% LyX specific LaTeX commands.
%% Because html converters don't know tabularnewline

%%%%%%%%%%%%%%%%%%%%%%%%%%%%%% Textclass specific LaTeX commands

\usepackage{amsthm}

\usepackage{amsfonts}

\usepackage{epsfig}

\usepackage{bm}

\usepackage{mathrsfs}

\usepackage{enumerate}

\@ifundefined{definecolor}{\@ifundefined{definecolor}
 {\@ifundefined{definecolor}
 {\usepackage{color}}{}
}{}
}{}

\usepackage{subfig}\usepackage[all]{xy}

\newcommand{\bbR}{\mathbb R}
\newcommand{\bbC}{\mathbb C}
\newcommand{\bbE}{\mathbb E}
\newcommand{\bbP}{\mathbb P}

\newcommand{\cO}{\mathcal O}

\newtheorem{theorem}{Theorem}[section]

\newtheorem{rem}{Remark}[section]
\newtheorem{prop}{Proposition}[section]

\newtheorem{ass}{Assumption}[section]
\newcounter{hypA}

\newcounter{probP}
\newenvironment{probP}{\refstepcounter{probP}\begin{itemize}
  \item[({\bf P\arabic{probP}})]}{\end{itemize}}

\usepackage{babel}\date{}

\author{Ryan Bennink$^*$}
\thanks{$^*$Oak Ridge National Laboratory, Oak Ridge, TN, 37831}

\author{Ajay Jasra$^+$}
\thanks{$^+$Department of Statistics and Applied Probability, National University of Singapore, Singapore}

\author{Kody J. H. Law$^\dagger$}
\thanks{$^\dagger$School of Mathematics, University of Manchester, Manchester, UK, M13 4PL}

\author{Pavel Lougovski$^\$$}
\thanks{$^\$$Oak Ridge National Laboratory, Oak Ridge, TN, 37831}

\makeatother

\title[UQ for quantum simulators]
{Estimation and uncertainty quantification for the output from quantum simulators}

\usepackage{babel}
\begin{document}

\begin{abstract}

%We consider t
The problem of estimating certain distributions over $\{0,1\}^d$ is considered here.
The distribution represents a quantum system of $d$ qubits, 
where there are non-trivial dependencies between the qubits.  
A maximum entropy approach is adopted to reconstruct the %target 
distribution from exact moments or observed empirical moments. % from observations.
The Robbins Monro algorithm is used to solve the intractable maximum entropy problem, 
by constructing an unbiased estimator of the un-normalized target with %from 
a sequential Monte Carlo sampler at each iteration.
In the case of empirical moments, this coincides with a maximum likelihood estimator.  
A Bayesian formulation is also considered in order to quantify posterior uncertainty.  
Several approaches are proposed in order to tackle this challenging problem, based on recently developed methodologies.
In particular, unbiased estimators of the gradient of the log posterior are constructed and used within a
provably convergent Langevin-based Markov chain Monte Carlo method.
The methods are illustrated on classically simulated %and real 
output from quantum simulators.

\end{abstract}

%+Title
%\begin{center}
\maketitle

\section{Introduction and motivation}\label{introduction}

Quantum computing holds a great promise to enable future simulations of quantum systems that would otherwise suffer from the curse of dimensionality when simulated classically. %computation. 
However, we are currently entering  the Noisy Intermediate Scale Quantum (NISQ) device era~\cite{preskillNISQ2018}. The hallmarks of NISQ era are relatively small (10-50 qubits) quantum computing devices and noisy qubits. Therefore, to succeed with quantum simulation on NISQ quantum computers, it is imperative to understand and fully characterize the effects of noise on the state of a quantum device. 

In quantum mechanics the state of a $d$-qubit quantum device is generally described by a density matrix $\rho$ --  a $2^{d}$-by-$2^{d}$ complex-valued positive semidefinite matrix with ${\rm Tr}\rho=1$. From the definition, it takes $2^{2d}-1$ real parameters to fully specify an arbitrary quantum state. If $O$ denotes a quantum observable, described by a $2^{d}$-by-$2^{d}$ Hermitian matrix ($O^{\dagger}=O$),
% [begin RSB]
then measuring $O$ yields one of its eigenvalues sampled from a distribution determined by $\rho$. In particular, the probability of the $i$th eigenvalue is $p_{i}=\langle \psi_i|\rho|\psi_i\rangle$, where $|\psi_i\rangle$ is the $i$th eigenvector in the bra-ket notation. Since there are at most $2^d$ eigenvalues, repeated measurements of any observable yields information about at most $2^{d}-1$ parameters of $\rho$. To estimate every parameter of the state one must measure at least $2^{d}+1$ independent observables.
% [end RSB]
This approach is known as quantum state tomography~\cite{Paris:QSE2010,hradil:QSE1997}. Its applicability is limited to small quantum devices due to exponential scaling in the number of measurements and  related issues of solving exponentially large systems of equations. Bayesian methods also have been suggested~\cite{jonesQInference1991,blumeBME:2010,huszarAdaptiveBQT2012,granadeQBME2016} to facilitate uncertainty quantification (UQ) in the context of quantum state tomography. However, the existing Bayesian quantum tomography solutions face similar scalability challenge as the number of parameters to infer still grows exponentially with the size of a quantum device.  

%Nonetheless, in cases when there exist a unitary matrix $U$ that diagonalizes simultaneously $\rho$ and $O$, measuring the observable $O$ provides complete information to estimate the state $\rho$.  

Here we consider a fundamentally new approach, which %has not yet been considered in the literature, and 
provides a scalable solution to  
%first venture into 
UQ for the output of quantum simulations. 
We will restrict our attention to a simplified scenario
%where the basis 
in which the density matrix and observable commute, 
hence are simultaneously diagonalizable, and the eigenvectors are known.
%is diagonalized is known %and ignore off-diagonal correlations. Moreover, 
%and our observable of interest is diagonal in the same basis.
%in the wavefunction, and treat is as the diagonal of the 
%posterior distribution of state given
%the observations 
In this way we are able to treat the quantum state, which is represented by 
the density matrix, as a classical probability distribution, 
given by the diagonal of the density matrix. 
Note that the standard quantum state tomography methods would still scale 
exponentially even for this simplified scenario.
% [begin RSB]
To avoid this problem we use the {\it maximum entropy} (MaxEnt) principle~\cite{jaynesMaxEnt1:1957,jaynesMaxEnt2:1957} 
to model a complete probability distribution based on a selected subset of observed or exact moments.
%[end RSB]
We then use well-established stochastic simulation algorithms \cite{robbins1951stochastic, kushner2003stochastic, delm:04, del2006sequential} 
to find the MaxEnt distribution with a cost which is polynomial in $d$.
Adopting a MaxEnt ansatz for the likelihood provides a Bayesian posterior which 
(i) converges to the MaxEnt distribution as the number of observations tends to infinity, and 
(ii) can be simulated with a cost which is polynomial in $d$ using recent stochastic simulation methods \cite{mcleish2011general, rhee2015unbiased, teh2011bayesian, teh2016consistency}.
%This allows us to improve scalability of both maximum-likelihood as well as Bayesian inference.
%
It is the topic of ongoing investigation to generalize the results to density matrices,
which will be reported in a future publication. 
It is hoped that a polynomial cost in $d$ can be preserved in this case as well, 
for example combining the methodology here with that of \cite{gross2010quantum,candes2009exact}.

%To clarify the picture mathematically, we first need to introduce some concepts from quantum mechanics. 

Assume that we know the basis in which the output state from a $d$-qubit quantum computer is diagonal. The density matrix in that basis reads,
\begin{equation}\label{eq:density}
\rho =  \sum_{j=1}^{2^d} p_j |\psi_j\rangle\langle \psi_j| \, 
\end{equation}
where  $\{ |\psi_j \rangle \}_{j=1}^{2^{d}}$ comprise a basis in the Hilbert space of the quantum computer $\bbC^{2^d}$. We also assume the observable $O$ that we measure is diagonal in the same basis i.e.
\begin{equation}\label{eq:observable}
O =  \sum_{j=1}^{2^d} o_j |\psi_j\rangle\langle \psi_j| \, 
\end{equation}
where $\{o_j\}_{j=1}^{2^d}$ are the eigenvalues of $O$ and $p_j = \langle \psi_j | \rho |\psi_j \rangle$ 
is the probability of observing state a measurement outcome corresponding to the eigenvalue $o_j$.

Without loss of generality we can interpret the probability distribution $p$ as a classical probability density $p(x)$ over the binary random variable $x \in \{0,1\}^d$. 
The mathematical task at hand is then to find $p(x)$ consistent with the observed outcomes $\{o_j\}$. A canonical example of such density in our context is the Ising model
\begin{equation}\label{eq:ising}
p(x; A) = \frac1{Z(A)} \exp( x^T A x) \, , 
\quad Z(A) = \sum_{x \in \{0,1\}^d} \exp( x^T A x) \, , 
\end{equation}
which is parametrized by some matrix $A\in \bbR^{d\times d}$.

%Note that $|\{0,1\}^d| = 2^d < \infty$  and so the above can all be formulated in terms of matrices by observing that  $\psi_j \in \{v\in \bbC^{2^d}; v^\dagger v=\langle v | v \rangle = 1\}$,  and $\rho\in \bbC^{2^d}$ positive semidefinite Hermitian with spectral  decomposition given by \eqref{eq:density}.

Given first and second moments of a distribution $\Pi$ on  $\{0,1\}^d$,  
the %{\it maximum entropy} 
MaxEnt distribution  is %given by 
the Ising model \eqref{eq:ising} which satisfies the moment constraints \cite{cover2012elements} .
With finitely many observations, one can replace the moments with 
observed empirical moments, and the corresponding MaxEnt distribution 
is the same one which maximizes the likelihood under a 
MaxEnt %(Ising) 
ansatz for its parameterized form \cite{murphy2012machine}.
Through a Bayesian framework \cite{green2015bayesian}, 
this likelihood can be combined with a prior on the parameters, 
resulting in a posterior distribution, which quantifies the uncertainty in 
the distribution given the observations.
We will develop and implement novel algorithms for the solution of these problems
with a cost which is polynomial in $d$,
based on recently developed technologies mentioned above.
Now we proceed with a more rigorous mathematical setup, 
and a summary of the results of the paper.

%\section{Quantum Simulation Setup}\label{sec:quantum}

\section{Mathematical Setup and Summary of Results}\label{sec:mathsetup}

Let $E=\{0,1\}$, and denote the state $x = [x_1,\dots, x_d] \in E^d$.  
Consider a  probability distribution $\Pi : \sigma(E^d) \times \Xi \rightarrow [0,1]$, parameterized by $\lambda \in \Xi$, 
with density $\pi:E^d\times \Xi \rightarrow [0,1]$.
%where 
%Define the state space as 
%We have $\int_{E^d} \Pi(dx|\lambda) =\sum_{x\in E^d} \pi(x|\lambda) =1$.
The target $\Pi(\cdot|\lambda)$ represents the distribution over $d$ dependent qubits $X=[X_1, \dots, X_d]^T$ 
given a particular value of $\lambda$, and the objective of this work is to identify $\lambda$.
We will denote the un-normalized measure by $Q: \sigma(E^d) \times \Xi \rightarrow [0,1]$, with density $q:E^d\times \Xi \rightarrow [0,1]$,
so that $\Pi = \frac1{Z(\lambda)} Q$ and $\pi=\frac1{Z(\lambda)} q$, 
with $Z(\lambda)=Q(1|\lambda) := %\int_{E^d} 1 Q(dx|\lambda) = %\int_{
\sum_{x\in E^d} 1 q(x|\lambda)$.

Let $\phi_j: E^d \rightarrow \bbR$, for $j=1,\dots, J$, and define $\phi = [\phi_1,\dots, \phi_J]^T: E^d \rightarrow \bbR^J$.
It is assumed that, for a particular unknown value of $\lambda$, we are given either 
\begin{probP}\label{prob1}
%\item 
$J$ exact moments $m = [m_1, \dots, m_J]^T$:
%This ansatz takes its name because if one is given
\begin{equation}\label{eq:moms}
m = \Pi(\phi | \lambda) %\bbE_{\pi(\cdot|\lambda)} \phi 
:= %\int_{E^d} \phi(x) \pi(dx|\lambda) = 
\sum_{x\in E^d}  \phi (x) \pi(x|\lambda) \, , %\, ,\quad {\rm for}~~ j=1,\dots, J \, ,
\end{equation}
\end{probP}
\begin{probP}\label{prob2}
$M$ independent and identically distributed (i.i.d.) 
noisy observations of this process $y^{(i)} \sim \Pi(\cdot|\lambda)$, for $i=1,\dots, M$. 
%from which $J$ moments can be approximated.
\end{probP}
As mentioned above, the aim in either case is to reconstruct $\Pi$ by identifying $\lambda$.

%The motivation for the ansatz \eqref{eq:maxent1} can be understood as follows.
%In the case of exact moments (i) 
First, consider problem (P\ref{prob1}).
Assume we seek the probability distribution $p$ which maximizes the entropy 
\begin{equation}\label{eq:entropy}
H(p) = - \sum_{x\in E^d} p(x) \log(p(x)) \, ,
\end{equation}
subject to the constraints \eqref{eq:moms}, with $p$ replacing $\pi(\cdot|\lambda)$.
This is in fact a convex optimization problem. Therefore, if there exists a probability distribution $p$
which satisfies the constraints \eqref{eq:moms}, 
then the necessary first order optimality conditions are also sufficient for global optimality \cite{boyd}.
For this particular problem, the first order conditions can be simplified to 
%one concludes that $p$ takes the form $
$p(x)=\pi(x|\lambda) = \frac1{Z(\lambda)}q(x|\lambda)$, where
\begin{equation}\label{eq:maxent1}
%\pi(x | \lambda) = \frac{1}{Z(\lambda)} 
q(x|\lambda) = \exp(\lambda^T \phi(x)) %\, , \quad  %\left [ \sum_{j=1}^J \lambda_j \phi_j(x) \right ] \, , \quad 
%Z(\lambda) = \sum_{x\in\{0,1\}^d}  \exp(\lambda^T \phi(x)) %\exp\left [ \sum_{j=1}^J \lambda_j \phi_j(x) \right ] 
\, ,
\end{equation}
and the vector of Lagrange multipliers $\lambda=[\lambda_1,\dots,\lambda_J]^T$ 
is the solution to the system of equations 
\begin{equation}\label{eq:lambda}
\Pi(m-\phi | \lambda) = \frac{Q(m-\phi|\lambda)}{Q(1|\lambda)} = 0 \, .
%\bbE_{\pi(\cdot|\lambda)} \phi = m\, . %\quad {\rm for}~~ j=1,\dots, J \, .
\end{equation}
%We will adopt a 
This is referred to as the method of \emph{maximum entropy} (MaxEnt).
See Chapter 11.1 of \cite{cover2012elements} for the derivation.
A density which is parametrized in the form \eqref{eq:maxent1} is more generally referred to as a
MaxEnt ansatz \cite{murphy2012machine}. % for $\pi$, with $\lambda=[\lambda_1,\dots,\lambda_J]^T$,

%This  i
In general, the system of equations \eqref{eq:lambda} does not have a closed form solution,
and it is %it is nonetheless 
a challenging problem to solve, particularly for large $d$, 
as one evaluation of the system would consist of $\cO(2^d)$ operations.  
%The issue is that we would like to never compute even a single such sum, let alone $J+1$. 
%such as \eqref{eq:momconstraints}, let alone $d(d+1)/2$ of them.  
%In fact, we seek algorithms which are $\cO(d^p)$ for some $p$, ideally $p=1$.  
The issue is known as the curse of dimensionality, and precludes the use of standard iterative algorithms for solving the problem.
We propose to utilize a Robbins Monro type \cite{robbins1951stochastic, kushner2003stochastic} stochastic iterative algorithm for its solution,
together with a sequential Monte Carlo (SMC) sampler \cite{delm:06b} 
\emph{unbiased estimator} of a system of equations equivalent to \eqref{eq:lambda}, $Q(m-\phi|\lambda)=0$,
within each iteration of the Robbins Monro. 
It is well-known that SMC samplers are stable with $\cO(d^2)$ 
operations for such targets which can be evaluated pointwise (in $x$)  \cite{beskos2014stability},
thus bypassing the curse of dimensionality.
That is the first major contribution of the paper.

%The second contribution of the paper is 
Now, consider problem (P\ref{prob2}),
%We are also 
concerned with recovering $\pi(\cdot|\lambda)$ from the i.i.d. observations $Y:=\{y^{(i)}\}_{i=1}^M$.
Following the MaxEnt ethos, we choose an appropriate set of functions $\phi$ as above \eqref{eq:moms},  %$\{\phi_j\}_{j=1}^J$, 
based on our assumptions of the underlying quantum system, 
and adopt the ansatz \eqref{eq:maxent1} for the \emph{likelihood} of a single observation, given $\lambda$: 
$\bbP(y^{(i)}|\lambda) = \pi(y^{(i)} | \lambda)$.
For example, one could use the set of first and second moments 
$\phi(x)={\rm vec}(\{x_i x_j\}_{1\leq  i \leq j \leq d})$.  
%Given this choice, we may then maximize the 
%\emph{likelihood} of the 
The likelihood of %observations 
$Y:=\{y^{(i)}\}_{i=1}^M$ given $\lambda$ is given by
\begin{equation}\label{eq:like}
\bbP(Y|\lambda) = \prod_{i=1}^M \pi(y^{(i)} | \lambda) \, .
\end{equation}
%where $\pi(y^{(i)} | \lambda) $ is defined in \eqref{eq:maxent1}.
%Maximizing this likelihood 
The maximizer of \eqref{eq:like} is called the 
maximum likelihood estimator (MLE). It is equivalent to the maximizer of 
$\log \bbP( Y| \lambda)$. We have 
%Observe that the gradient of the %(scaled) 
%log likelihood in our case is 
\begin{equation}\label{eq:loglike}
\nabla_\lambda \log \bbP( Y| \lambda) =  M \Pi( \widehat m - \phi | \lambda) \, ,
%\frac1M \sum_{i=1}^M \phi(y^{(i)}) -  \bbE_{\pi(\cdot|\lambda)} \phi =  
%\frac1M \sum_{i=1}^M \phi(y^{(i)}) -  \frac1{Z(\lambda)} {\sum_{x\in\{0,1\}^d} \phi(x) \exp(\lambda^T \phi(x))  %\left [ \sum_{j=1}^J \lambda_j \phi_j(x) \right ] 
%\exp\left [ - \sum_{j=1}^J \lambda_j \phi_j(x) \right ] \, ,
\end{equation}
where 
\begin{equation}\label{eq:empirical}
\widehat m =\frac1M \sum_{i=1}^M \phi(y^{(i)}) \, .
\end{equation}
%and it 
The necessary first order optimality condition is vanishing gradient, i.e.
%where its gradient is given by 
%In other words, a necessary condition for $\lambda$ to be an MLE is
%In other words, one needs to solve 
\begin{equation}\label{eq:lambda-empirical}
\Pi(\widehat m - \phi | \lambda) = 0 \, ,
%\bbE_{\pi(\cdot | \lambda)} \phi = \, . %\quad {\rm for}~~ j=1,\dots, J \, .
\end{equation}
which is clearly equivalent to %solving the maximum entropy problem 
to the MaxEnt solution \eqref{eq:lambda}, except with empirical moments replacing exact moments. 
%\cite{murphy2012machine} 
%\begin{equation}\label{eq:empirical}
%\widehat m =\frac1M \sum_{i=1}^M \phi(y^{(i)}) \, .
%\end{equation}
%This has the same form as \eqref{eq:lambda} for (P\ref{prob1}), 
%and 
This problem can therefore be solved using the same method described above.

%However, t
This solution is still somewhat unsettling. Even if we are content with our choice of $\phi$, %$\{\phi_j\}_{j=1}^J$,
we have arrived at a point estimate for inference given a finite set of observations. 
The scrupulous Bayesian will not allow the story to end here. 
Indeed for the sake of uncertainty quantification (UQ), we require a \emph{posterior distribution} 
over $\pi(\cdot | \lambda)$, or equivalently, a posterior distribution over $\lambda$.
In order to accomplish this, we complete the picture with a prior on $\lambda$, giving 
\begin{equation}\label{eq:posterior}
\bbP(\lambda | Y) \propto \bbP( Y| \lambda) \bbP(\lambda) \, ,
\end{equation}
where the constant of proportionality depends only upon the observations $Y$.
The second major contribution of the paper concerns quantifying a posteriori uncertainty using this approach.
Due to the fact that the likelihood $\bbP( Y| \lambda)$ itself has an intractable normalizing constant 
$Q(1|\lambda)^M$, this notoriously difficult problem is referred to as {\em doubly-intractable} \cite{moller2006efficient, murray2006mcmc}. 

%{\bf This requires development.}
If we could at least evaluate a non-negative and unbiased estimator of the likelihood $\bbP( Y| \lambda)$, 
or equivalently \eqref{eq:posterior}, then there are algorithms which can be used, %we could use
%To this end, 
such as  pseudo-marginal Markov chain Monte Carlo (MCMC) method \cite{beaumont2003estimation, andrieu2009pseudo}. 
The recent work \cite{lyne2015russian} proposes to construct unbiased but signed estimates 
of the target \eqref{eq:posterior}, 
and then they use such estimate within pseudo-marginal MCMC by considering the absolute value of the estimate 
in the acceptance ratio and then storing the sign of the accepted samples in order to correct the ultimate estimator.
The work \cite{wei2017markov} proposes to use alternative unbiased estimators in this same framework.
%Recently a few works have 
Note that earlier work also appeared which
allows one to sample from such a target \cite{moller2006efficient, murray2006mcmc},
however these works require being able to sample from $\pi(\cdot|\lambda)$ for a given value of $\lambda$. % {\bf (is last one general?)}.  
This is possible in some cases, for example 
the Ising model (or Boltzmann machine) \cite{propp1996exact, childs2001exact}, but not in general.  
The works \cite{lyne2015russian, wei2017markov} are generally applicable.

%If we could evaluate an unbiased estimator of 
We will take a new approach to doubly intractable simulation in the present work.
A number of recently developed methods require only unbiased estimators of \eqref{eq:loglike}.
These include the stochastic gradient Langevin dynamics (SGLD) \cite{teh2011bayesian},
the zig-zag sampler \cite{bierkens2016zig}, or the bouncy particle sampler \cite{bouchard2018bouncy, peters2012rejection}.
We propose to construct an 
unbiased estimator of \eqref{eq:loglike} using the debiasing technique 
from the recent works \cite{mcleish2011general,rhee2015unbiased}, 
and then use this unbiased estimator within such MCMC methods.
%These unbiased estimators can be used within any 
%of the algorithms mentioned above \cite{teh2011bayesian, bierkens2016zig, bouchard2018bouncy, peters2012rejection},but w
This is the second major contribution of this work.
In particular, we constrain our attention here to the SGLD.  

The rest of the paper is organized as follows. In Section \ref{sec:smcsamp}
the SMC sampler is introduced, which will be used as a crucial component within all subsequent algorithms.
In particular, it will be shown how to construct consistent estimators of $\Pi(\varphi | \lambda)$ and unbiased estimators of $Q(\varphi | \lambda)$.
In Section \ref{sec:sa} we introduce the Robbins Monro algorithm, and present the first main result regarding 
its implementation using the SMC sampler. 
Section \ref{sec:uq} concerns Bayesian posterior UQ in the context of finitely many observations.
In Section \ref{sec:simos} we simulate data from a toy model, and illustrate the algorithm from Section \ref{sec:sa}
for computing the MaxEnt solution using both exact moments and observed moments (where it coincides with the MLE), 
as well as the algorithm from Section \ref{sec:uq} for UQ in the case of observations.

%, described in Section \ref{sec:maxent}.
%The SMC samplers are described in Section \ref{sec:smcsamp}.

%This is considered in Section \ref{ssec:mom_meth}.
%
%
%We propose some alternatives %which allow approximate quantification of uncertainty 
%in Section \ref{ssec:uq}.  

\section{SMC samplers for estimation of integrals}
\label{sec:smcsamp}

Here it will be described how to estimate expectations 
$\Pi(\varphi | \lambda) := 
%\int_{E^d} \varphi(x) \Pi(dx|\lambda)$,
\sum_{x\in E^d} \varphi(x) \pi(x|\lambda)$,
for bounded functions $\varphi:E^d \rightarrow \bbR$, 
and construct non-negative unbiased estimators 
of $Q(\varphi | \lambda)=
\sum_{x\in E^d} \varphi(x) q(x|\lambda)$, 
for a given value of $\lambda$, 
using an auxillary variable technique. 
%For ease of notation we will suppress the dependence 
%on $\lambda$ in this section.

Let $\eta_0 = \Gamma_0$ be a known probability distribution over 
a general state space $U$, 
and let $\eta_j$ be some probability over $U$ for $j=0,\dots,J$, 
whose density can be evaluated up to a normalizing constant, 
but cannot be sampled from directly.
Define the unnormalized measure associated to $\eta_j$ 
%which can be evaluated, 
by $\Gamma_j$, with density $\gamma_j$, so that 
%For example, $\gamma_n$, 
%Define the posterior density of $\theta|x_0, \dots, x_n$ by
\begin{equation}\label{eq:targ}
\eta_j(dx) = \Gamma_j(dx)/\Gamma_j(1) \, ,
\end{equation}
where $\Gamma_j(1) = %\sum_{x\in U} \gamma_j(x)$.
\int_{U} \gamma_j(x) dx$.

The objective is to 
%obtain samples $x^i$ %\sim \eta_J$ 
%for $i=1,\dots, N$, and 
approximate expectations for bounded functions $\varphi:U \rightarrow \bbR$
$$
\eta_J(\varphi) := \int_{U} \varphi(x) \eta_J(dx) \, . %= \int_{U} \varphi(x) \eta_J(x)dx \, ,
$$
%with an empirical average
We cannot sample from this distribution, but we can obtain a convergent estimator 
$$
\eta_J(\varphi) \approx \frac1{N} \sum_{i=1}^N \varphi(x^i_J) \, ,
$$
using SMC samplers,
as described below \cite{delm:06b} (see also \cite{jarzynski1997nonequilibrium, neal2001annealed, chopin2002sequential}).

%To give a concrete context of such distributions, suppose we have a posterior distribution
%arising from a Bayesian statistical model, where $p_0$ 
%is the prior and $\cL(\cdot)p_0(\cdot)$ is the unnormalized posterior.  
%Then the intermediate distributions can be given in terms of the densities
%$$
%\gamma_j(\cdot) = \cL(\cdot)^{\beta_j} p_0(\cdot) \, ,
%$$
%where $\beta_0=0$, $\beta_J=1$, and the remaining $\beta_j$ interpolate in between.

\begin{rem} 
In the context of this paper %$x$ takes the role of $x$ and 
the target $\eta_J = \Pi(\cdot|\lambda)$ will be considered in each inner iteration 
of the stochastic iterative algorithm. 
So $\gamma_J(x) = q(x|\lambda)$. 
To make the presentation more concrete, let $\gamma_0 \propto 1$ and 
\begin{equation}\label{eq:smc_intermediate}
\gamma_j(x) = %q_j(x|\lambda) = 
\exp\left [ \sum_{i=1}^j \lambda_i \phi_i(x) \right ] \, .
\end{equation}
%$\gamma_j$ will interpolate from a measure $\gamma_0$
%which can be sampled from exactly.
However, note that the choice of intermediate targets $\gamma_j$ to interpolate between an initial density $\gamma_0$, 
which can be sampled from exactly, and $\gamma_J$ are essentially arbitrary.
For example, one could incorporate several constraints per iteration or anneal in between individual constraints.
\end{rem}
%The parameter $\lambda$ will be suppressed for the remainder of the section.

\subsection{Estimating the target}
\label{sec:smc}

%Suppose $D$ is known for now.  It is well-known how to estimate $D$ for this type of model, 
%given $(x_0,\dots, x_J)$, which will be done ahead of time.

Define 
$$
G_j(x) = \gamma_{j+1}(x)/\gamma_j(x) \, .
$$
%where the latter equality holds for $n\geq 1$ and $G_0(x)=p_\delta(x_{1}|x_0,x)/p_0(x)$.
Let $M_j$ denote a Markov kernel such that 
$$
(\eta_j M_j)(dx) := \int_{U} \eta_j(dx') M_j(x', dx) =  \eta_j(dx) \, .
$$

Iterate Algorithm \ref{algo:smc} and define 
\begin{equation}\label{eq:empirical}
\eta_j^N(\varphi) := \frac1{N} \sum_{i=1}^N \varphi(x_j^{i}) \, .
\end{equation}

\begin{algorithm}[h]
%{\center
\caption{SMC sampler} %}
%\newline
Let $x_0^{i} \sim \eta_0$.  For $j=1,\dots,J$, repeat the following steps for $i=1,\dots, N$:
\begin{itemize}
\item[(i)] Define %$\tilde w_j^i := G_{j-1}(x_{j-1}^{i})$ and 
$w_j^i = G_{j-1}(x_{j-1}^{i}) / \sum_{k=1}^N G_{j-1}(x_{j-1}^{k})$.
%\tilde w_j^i / \sum_{k=1}^n \tilde w_j^k$.
\item[(ii)] Resample. e.g. select $I_j^i \sim \{ w_j^1, \dots, w_j^N\}$, and let $\hat x_j^{i} = x_{j-1}^{I_j^i}$.
\item[(iii)] Mutate. Draw $x_j^{i} \sim M_j(\hat x_j^{i},\cdot)$.
\end{itemize}\label{algo:smc}
\end{algorithm}

\begin{ass} \label{ass:G}
There exists some $C > 0$ such that for all %$\lambda \in \Xi$ and 
$x\in U$
$$
0<G_j(x) \leq C \, .
$$
%$$
%c < q(x|\lambda) < c^{-1} \, .
%$$
\end{ass}

\begin{prop}
%Let $\gamma_0 \propto 1$, $\gamma_J=q(\cdot|\lambda)$, and choose $\gamma_j$ to interpolate between them.
Assume \ref{ass:G} holds. %is well known that 
Then for $N>1$, $p\geq 1$,
%as $N \rightarrow \infty$, 
%one has that $\eta_j^N(\varphi) \rightarrow \eta_j(\varphi)$ almost surely. 
%Furthermore,
$$
\bbE | \eta_j^N(\varphi) - \eta_j(\varphi) |^p \leq \frac{C(j)}{N^{p/2}} \, .
$$
\end{prop}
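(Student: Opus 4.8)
The plan is to establish the $L^p$ error bound by the standard route for particle/SMC approximations: a telescoping (martingale-type) decomposition of the error, combined with the Marcinkiewicz--Zygmund inequality to control each conditionally-centered increment, and an induction on $j$. First I would observe that $\eta_0^N(\varphi)-\eta_0(\varphi)$ is an average of $N$ i.i.d.\ centered bounded random variables, so the $p=2$ case (hence all $p\geq1$ by interpolation, or directly by Marcinkiewicz--Zygmund for general $p$) gives a bound of order $N^{-1/2}$; this is the base case. For the inductive step I would write, for each $j$, the decomposition
\begin{equation}
\eta_j^N(\varphi) - \eta_j(\varphi) = \big[\eta_j^N(\varphi) - \bbE(\eta_j^N(\varphi)\mid \mathcal{F}_{j-1})\big] + \big[\bbE(\eta_j^N(\varphi)\mid \mathcal{F}_{j-1}) - \eta_j(\varphi)\big],
\end{equation}
where $\mathcal{F}_{j-1}$ is the $\sigma$-algebra generated by the particles up to stage $j-1$. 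The first bracket, conditionally on $\mathcal{F}_{j-1}$, is again an average of $N$ conditionally i.i.d.\ centered terms (the resample-then-mutate step produces exchangeable draws), each bounded since $\varphi$ is bounded, so Marcinkiewicz--Zygmund gives a conditional $L^p$ bound of order $N^{-1/2}$ with a constant depending only on $\|\varphi\|_\infty$ and $p$.

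The second bracket is where Assumption~\ref{ass:G} enters. I would identify $\bbE(\eta_j^N(\varphi)\mid\mathcal{F}_{j-1})$ as $\Phi_j(\eta_{j-1}^N)(\varphi)$, where $\Phi_j$ is the one-step SMC update map $\mu \mapsto (\mu(G_{j-1}\,\cdot)/\mu(G_{j-1}))M_j$, and note $\Phi_j(\eta_{j-1})=\eta_j$. Thus the second bracket equals $\Phi_j(\eta_{j-1}^N)(\varphi)-\Phi_j(\eta_{j-1})(\varphi)$, and I would show this map is Lipschitz in the total-variation-like sense appropriate here: writing it out,
\begin{equation}
\Phi_j(\mu)(\varphi)-\Phi_j(\nu)(\varphi) = \frac{\mu\big(G_{j-1}\,(M_j\varphi)\big)}{\mu(G_{j-1})} - \frac{\nu\big(G_{j-1}\,(M_j\varphi)\big)}{\nu(G_{j-1})},
\end{equation}
and a standard manipulation (common denominator, add and subtract) bounds this by $\frac{2\|\varphi\|_\infty \, \|G_{j-1}\|_\infty}{\nu(G_{j-1})}$ times $\big|\mu(\tilde\varphi_1)-\nu(\tilde\varphi_1)\big| + \big|\mu(\tilde\varphi_2)-\nu(\tilde\varphi_2)\big|$ for suitable bounded test functions $\tilde\varphi_i$ with $\|\tilde\varphi_i\|_\infty\leq\|G_{j-1}\|_\infty$ (up to constants). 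Here $\|G_{j-1}\|_\infty \leq C$ by Assumption~\ref{ass:G}; the denominator $\nu(G_{j-1})=\eta_{j-1}(G_{j-1})>0$ is a fixed positive constant (part of $C(j)$). Applying the induction hypothesis to $\mu=\eta_{j-1}^N$, $\nu=\eta_{j-1}$ gives the second bracket an $L^p$ bound of order $N^{-1/2}$.

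Finally I would combine the two brackets with Minkowski's inequality, $\|\eta_j^N(\varphi)-\eta_j(\varphi)\|_p \leq \|\text{bracket 1}\|_p + \|\text{bracket 2}\|_p$, and collect the resulting constant as $C(j)$, depending on $C$, $j$, $p$, $\|\varphi\|_\infty$, and the (deterministic, strictly positive) normalizers $\eta_i(G_i)$; raising to the $p$-th power yields the stated $C(j)/N^{p/2}$. The main obstacle is the second bracket: making the Lipschitz/stability estimate for $\Phi_j$ precise, in particular getting the right bounded test functions out of the ratio manipulation and tracking that the lower bound on $\eta_{j-1}(G_{j-1})$ really is available (it is a fixed positive number under Assumption~\ref{ass:G} together with positivity of $G$, but one should be careful that $C(j)$ may blow up with $j$ if these normalizers are only controlled from above). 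I would also remark that for $p\neq 2$ one may either invoke Marcinkiewicz--Zygmund directly at each stage or bound the $L^p$ norm by the $L^{2\lceil p/2\rceil}$ norm and use Hölder; either way the $N^{-1/2}$ rate is unaffected. This is the standard Del~Moral-type argument (cf.\ \cite{delm:04,delm:06b}), and I would cite it rather than reproduce every routine bound.
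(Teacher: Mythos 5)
Your argument is correct and is exactly the standard Del Moral-type induction that the paper itself relies on: the paper does not reprove this proposition but defers to \cite{delm:04}, whose proof uses the same decomposition you describe (conditional Marcinkiewicz--Zygmund for the local sampling error given $\mathcal{F}_{j-1}$, plus a one-step stability bound for $\Phi_j$, iterated over $j$). The obstacle you flag about the denominator dissolves, since the exact identity
$\Phi_j(\mu)(\varphi)-\Phi_j(\nu)(\varphi)=\nu(G_{j-1})^{-1}\left[(\mu-\nu)(G_{j-1}M_j\varphi)-\Phi_j(\mu)(\varphi)\,(\mu-\nu)(G_{j-1})\right]$
has the deterministic normalizer $\nu(G_{j-1})=\eta_{j-1}(G_{j-1})>0$ in the denominator while $|\Phi_j(\mu)(\varphi)|\leq\|\varphi\|_\infty$, so Assumption \ref{ass:G} alone suffices, with $C(j)$ absorbing $\|\varphi\|_\infty$, $C$, and the factors $\eta_{i}(G_{i})^{-1}$ (which may indeed grow with $j$, consistent with the remark following the proposition).
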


The constant $C(j)$ can be made to be independent of $j$ if an additional strong mixing condition is assumed on $M_j$, uniformly in $j$ \cite{delm:04}.
%Central limit theorem and large deviations estimates can also be obtained \cite{delm:04}.
%The choice of $\gamma_j$ is essentially arbitrary, but a natural choice is $\exp\{\sum_{i=1}^j \lambda_i \phi_i\}$.
%This is why the letter $J$ is chosen as the final index, however the value does not have to coincide with the number of constraints. 
%For example one could incorporate several constraints per iteration or anneal in between individual constraints.

\subsection{Estimating the normalizing constant}
\label{sec:noco}

Recall $\Gamma_j(1) = \sum_{x\in U} \gamma_j(x)$, and observe that 
$$
\eta_j(G_j) = \frac{1}{\Gamma_j(1)} \sum_{x\in U} \frac{\gamma_{j+1}(x)}{\gamma_{j}(x)} \gamma_{j}(x)  = \Gamma_{j+1}(1)/\Gamma_j(1) \, .
$$ 
Noting that $\Gamma_0(1)=\eta_0(1)=1$, therefore
$$
\Gamma_J(1) = \prod_{j=0}^{J-1} \eta_{j}(G_{j})\, .
$$
This is the normalizing constant of $\eta_J$. 

Observe that using Algorithm \ref{algo:smc} %from section \ref{sec:smc} 
we can construct 
an estimator by 
$$
\eta^N_{j}(G_{j}) = \frac1N \sum_{i=1}^N G_j(x_j^{i}) \, . %= \frac1N \sum_{i=1}^N \tilde w_{j+1}^i \, .
$$
%And this can be extended to an estimator
%$$
%\widehat Z_J^N := \prod_{j=0}^{J-1} \eta^N_{j}(G_{j}) \, .
%$$
%It is a remarkable fact, which can be proved by induction \cite{delm:04}, that
%$$
%\bbE \widehat Z_J^N = \Gamma_J(1) \, .
%$$
%In other words, $\widehat Z_J^N$ is an $N-$particle unbiased estimator of the normalizing constant.
%%Furthermore 
Recall that for any $\varphi:U \rightarrow \bbR$ we have  
$\Gamma_J(\varphi) = \Gamma_J(1) \eta_J(\varphi)$ by definition \eqref{eq:targ}.  
Define 
\begin{equation}\label{eq:unbiased}
 \Gamma_J^N(\varphi) := \prod_{j=0}^{J-1} \eta^N_{j}(G_{j}) \eta_J^N(\varphi) \, . %= \widehat Z_J^N  \eta_J^N(\varphi) \, .
\end{equation}
The following proposition is proven in \cite{delm:04}.
The proof is reproduced in Appendix \ref{app:1} for completeness, due to its importance to the results of this paper.

\begin{prop}\label{prop:unbiasedunno}
%Let $\gamma_0 \propto 1$, $\gamma_J=q(\cdot|\lambda)$, and choose $\gamma_j$ to interpolate between them. %(e.g. by annealing).
Assume \ref{ass:G} holds. 
Then $\bbE \Gamma_J^N(\varphi) = \Gamma_J(\varphi)$. % is an unbiased estimator of $\gamma_J(\varphi)$.
\end{prop}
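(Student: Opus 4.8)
The plan is to prove the stronger statement that $\bbE\,\Gamma_j^N(\varphi) = \Gamma_j(\varphi)$ holds for \emph{every} $j\in\{0,\dots,J\}$ and \emph{every} bounded $\varphi:U\to\bbR$, where $\Gamma_j^N(\varphi):=\prod_{k=0}^{j-1}\eta_k^N(G_k)\,\eta_j^N(\varphi)$, by induction on $j$. Strengthening from the fixed $\varphi$ to an arbitrary bounded test function is essential, because the inductive step applies the hypothesis at level $j$ to the function $G_j\cdot(M_{j+1}\varphi)$, which is bounded precisely because Assumption~\ref{ass:G} bounds $G_j$ and $\|M_{j+1}\varphi\|_\infty\le\|\varphi\|_\infty$. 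The base case $j=0$ is immediate: the empty product equals $1$, the $x_0^i$ are i.i.d.\ from $\eta_0=\Gamma_0$, so $\bbE\,\eta_0^N(\varphi)=\eta_0(\varphi)=\Gamma_0(\varphi)$ (using $\Gamma_0(1)=1$).

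For the inductive step I would introduce the filtration $\mathcal F_j=\sigma(x_0^i,\dots,x_j^i:1\le i\le N)$ and condition on $\mathcal F_j$. Given $\mathcal F_j$, Algorithm~\ref{algo:smc} produces $(x_{j+1}^i)_{i=1}^N$ that are conditionally i.i.d.\ with common law $\sum_{k=1}^N w_{j+1}^k\,M_{j+1}(x_j^k,\cdot)$, where $w_{j+1}^k=G_j(x_j^k)/\sum_{l}G_j(x_j^l)$. Hence $\bbE[\eta_{j+1}^N(\varphi)\mid\mathcal F_j]=\sum_k w_{j+1}^k (M_{j+1}\varphi)(x_j^k)=\eta_j^N(G_j\,M_{j+1}\varphi)/\eta_j^N(G_j)$. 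Multiplying by the $\mathcal F_j$-measurable quantity $\prod_{k=0}^{j-1}\eta_k^N(G_k)\cdot\eta_j^N(G_j)$ cancels the denominator and gives $\bbE[\Gamma_{j+1}^N(\varphi)\mid\mathcal F_j]=\Gamma_j^N(G_j\,M_{j+1}\varphi)$; taking expectations and invoking the inductive hypothesis yields $\bbE\,\Gamma_{j+1}^N(\varphi)=\Gamma_j(G_j\,M_{j+1}\varphi)$.

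It then remains to identify $\Gamma_j(G_j\,M_{j+1}\varphi)$ with $\Gamma_{j+1}(\varphi)$. Since $G_j=\gamma_{j+1}/\gamma_j$ we get $\Gamma_j(G_j\,M_{j+1}\varphi)=\sum_{x}\gamma_j(x)\,\frac{\gamma_{j+1}(x)}{\gamma_j(x)}\,(M_{j+1}\varphi)(x)=\Gamma_{j+1}(M_{j+1}\varphi)$, and the $\eta_{j+1}$-invariance of $M_{j+1}$ gives $\eta_{j+1}(M_{j+1}\varphi)=\eta_{j+1}(\varphi)$, so $\Gamma_{j+1}(M_{j+1}\varphi)=\Gamma_{j+1}(1)\,\eta_{j+1}(M_{j+1}\varphi)=\Gamma_{j+1}(1)\,\eta_{j+1}(\varphi)=\Gamma_{j+1}(\varphi)$. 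This closes the induction, and specializing to $j=J$ gives the claim. The one genuinely delicate point — the step I would treat most carefully — is the conditional-independence bookkeeping: verifying that, conditionally on $\mathcal F_j$, the resample-then-mutate operation produces a generation that is i.i.d.\ with exactly the stated mixture law, since this is what makes the conditional expectation factor cleanly into the product times $\eta_j^N(G_j\,M_{j+1}\varphi)$. Everything else is a telescoping identity together with the invariance of the mutation kernels and the boundedness bound from Assumption~\ref{ass:G}.
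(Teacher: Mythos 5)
Your proof is correct and is essentially the paper's argument: the key step in both is that, conditionally on the history $x_{0:j}^{1:N}$, the new generation has law $\Phi_{j+1}(\eta_j^N)=\eta_j^N(G_j M_{j+1}\cdot)/\eta_j^N(G_j)$, so the denominator cancels against the factor $\eta_j^N(G_j)$ in the product. The only difference is organizational: you run a forward induction over $j$ (closing each step with the invariance $\eta_{j+1}(M_{j+1}\varphi)=\eta_{j+1}(\varphi)$ and the telescoping $G_j=\gamma_{j+1}/\gamma_j$), whereas the paper peels off terms backwards from $J$ and identifies the fully unrolled expression with the Feynman--Kac representation \eqref{eq:unnorm} of $\Gamma_J(\varphi)$.
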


%\section{Exact moments}\label{sec:exact}
\section{Stochastic approximation}
\label{sec:sa}

Here it will be assumed that we are given values 
$m = \Pi(\phi|\lambda) \in \bbR^J$, i.e. Problem (P\ref{prob1}). 
%corresponding to the expectation under $\pi(\cdot| \lambda)$ 
%of the $J$ dimensional vector of constraints $\phi$ as in \eqref{eq:moms}. 
As mentioned in Section \ref{sec:mathsetup}, the MaxEnt approach entails solving the convex optimization problem 
\eqref{eq:entropy} (which has domain $\{x; p(x) > 0\}$) subject to the constraints 
%$\int_{E} p(x) dx=1,$ 
$\sum_{x\in E^d} p(x) = 1$, and \eqref{eq:moms}.
It is necessary and sufficient to solve the equation \eqref{eq:lambda}, reproduced below

\begin{equation}\label{eq:maxent_fp}
%m - \frac{\sum_{x\in \{0,1\}^d}  \phi(x) q(x | \lambda)}{ \sum_{x\in \{0,1\}^d} q(x | \lambda) } = 
%\Pi(m-\phi | \lambda) = 
\frac{Q(m - \phi | \lambda)}{Q(1|\lambda)} = 0 \, .
\end{equation}

%That seems easy enough. 
As mentioned above, a naive computation of the left-hand side of the above equation for a single value of $\lambda$
would incur a cost of $\cO(2^d)$, which makes the problem intractable for very small values of $d > 15$ or so.
It is however possible to obtain a noisy estimate %of \eqref{eq:maxent_fp} 
using a stochastic simulation algorithm.
Algorithms such as MCMC \cite{hastings1970monte, gilks2005markov} are designed to sample the %such 
state space in $\cO(d)$ steps \cite{roberts2001optimal},  
and provide a consistent estimate of \eqref{eq:maxent_fp} for a given value of $\lambda$.
%It is standard practice to simply use the consistent estimator within a stochastic simulation algorithm \cite{hinton2002training, murphy2012machine},
%however there are no theoretical convergence guarantees in this case. {UPDATE MOULINES/ANDRIEU/ETC)
 
Returning to \eqref{eq:maxent_fp}, observe that solving this equation is equivalent to solving $Q(m-\phi|\lambda) =0$.
%Furthermore, %observe that 
Assume we can simulate unbiased estimates $\widehat F (\lambda)$, 
i.e. estimates such that $\bbE \widehat F (\lambda) = Q(m-\phi | \lambda)$.
%then we can implement t
The Robbins Monro algorithm \cite{robbins1951stochastic} for stochastic fixed point iteration for 
finding a root of the equation $Q(m-\phi | \lambda)=0$ takes the form
\begin{equation}\label{eq:robbins_monro}
\lambda^{k+1} = \lambda^k + \delta_k \widehat{F}(\lambda^k) \, .
\end{equation}
Under technical conditions, and for appropriate choice of $\delta_k$ such that $\sum_k \delta_k =\infty$ and $\sum_k \delta_k^2 <\infty$,
it can be shown that this algorithm converges to a fixed point of \eqref{eq:maxent_fp} \cite{robbins1951stochastic, kushner2003stochastic}
without ever solving the equation exactly, even once.

We make the following assumption here, which is sufficient to ensure Assumption \ref{ass:G}

\begin{ass}\label{ass:q}
There exists a $c > 0$ such that for all $\lambda \in \Xi$ and $x\in E^d$
$$
c < q(x|\lambda) < c^{-1} \, .
$$
\end{ass}
%{\bf Check! Probably not true unless $\Lambda$ is compact. Do we need more??? Don't want $F(\lambda)=\infty$ or $Z(\lambda)=0$}

\begin{prop}\label{prop:robbinsmonro}
Assume \ref{ass:q}.
Let $\widehat F (\lambda)$ be such that $\bbE \widehat F (\lambda) = Q(m-\phi | \lambda)$. %F(\lambda)$ in \eqref{eq:maxent_fpalt}. 
If $\{\delta_k\}_{k=1}^\infty$ is chosen so that $\sum_k \delta_k =\infty$ and $\sum_k \delta_k^2 <\infty$, 
then $\lambda^k \rightarrow \lambda^*$, where $\lambda^k$ is given by \eqref{eq:robbins_monro} and 
$\lambda^*$ is the solution to $Q(m-\phi|\lambda^*)=0$, %\eqref{eq:maxent_fpalt}, 
hence \eqref{eq:maxent_fp}.
\end{prop}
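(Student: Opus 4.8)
The plan is to verify the classical hypotheses of the Robbins--Monro convergence theorem (as in \cite{robbins1951stochastic, kushner2003stochastic}) for the specific map $F(\lambda) := Q(m-\phi\mid\lambda)$ and its unbiased estimator $\widehat F(\lambda)$. The three ingredients one needs are: (i) existence and uniqueness of a root $\lambda^*$ of $F$; (ii) a stability/Lyapunov condition ensuring that iterates are driven toward $\lambda^*$, typically a monotonicity condition of the form $\langle \lambda-\lambda^*, F(\lambda)\rangle < 0$ for $\lambda\neq\lambda^*$, together with a linear growth bound $\|F(\lambda)\|\le K(1+\|\lambda-\lambda^*\|)$; and (iii) a second-moment bound on the noise, $\bbE\|\widehat F(\lambda)\|^2 \le K(1+\|\lambda-\lambda^*\|^2)$. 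The step-size conditions $\sum_k\delta_k=\infty$, $\sum_k\delta_k^2<\infty$ are given by hypothesis, so the whole task reduces to establishing (i)--(iii).

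For (i), I would invoke the convexity structure already set up in Section \ref{sec:sa}: the MaxEnt problem \eqref{eq:entropy} is a strictly concave maximization over the (relatively open) probability simplex subject to the affine constraints \eqref{eq:moms}, so if a feasible interior point exists the KKT conditions characterize a unique optimizer, and hence $\lambda^*$ solving $Q(m-\phi\mid\lambda^*)=0$ is unique (up to the natural degeneracy if the $\phi_j$ are affinely dependent, which I would assume away or quotient out). Equivalently, one observes that $F = \nabla_\lambda \log Q(1\mid\lambda) \cdot(\text{const}) $ up to normalization: indeed $\nabla_\lambda \log Z(\lambda) = \Pi(\phi\mid\lambda)$, so $Q(m-\phi\mid\lambda) = Z(\lambda)\big(m - \nabla_\lambda\log Z(\lambda)\big)$, and the function $\Lambda(\lambda):=\log Z(\lambda) - \lambda^T m$ is convex with Hessian $\mathrm{Cov}_{\pi(\cdot\mid\lambda)}(\phi)$. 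Under Assumption \ref{ass:q} the density $q(x\mid\lambda)$ is uniformly bounded above and below, so $Z(\lambda)\in[c\,2^d, c^{-1}2^d]$ is bounded away from $0$ and $\infty$, and provided the covariance of $\phi$ under $\pi(\cdot\mid\lambda)$ is uniformly positive definite (which I would add as a mild non-degeneracy assumption, or derive from Assumption \ref{ass:q} plus linear independence of the $\phi_j$), $\Lambda$ is strongly convex with a unique minimizer $\lambda^*$.

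For (ii), write $F(\lambda) = -Z(\lambda)\nabla\Lambda(\lambda)$; then $\langle \lambda-\lambda^*, F(\lambda)\rangle = -Z(\lambda)\langle\lambda-\lambda^*, \nabla\Lambda(\lambda)-\nabla\Lambda(\lambda^*)\rangle < 0$ for $\lambda\neq\lambda^*$ by strong convexity of $\Lambda$ and positivity of $Z$, which is exactly the Lyapunov descent condition needed. The linear growth bound on $\|F(\lambda)\|$ follows because $Z(\lambda)$ is uniformly bounded (Assumption \ref{ass:q}) and $\|\nabla\Lambda(\lambda)\| = \|m - \Pi(\phi\mid\lambda)\|$, with $\Pi(\phi\mid\lambda)$ bounded since $\phi$ is a fixed finite collection of functions on the finite set $E^d$; in fact $F$ is globally bounded, so this is immediate. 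For (iii), the estimator $\widehat F(\lambda)$ is built from the SMC normalizing-constant estimator of Proposition \ref{prop:unbiasedunno} applied componentwise, and Assumption \ref{ass:q} guarantees Assumption \ref{ass:G}, so by the $L^p$ bound in the first proposition of Section \ref{sec:smcsamp} (with $p=2$) together with unbiasedness we get $\bbE\|\widehat F(\lambda)\|^2 \le \|F(\lambda)\|^2 + C/N$, uniformly bounded in $\lambda$. With (i)--(iii) in hand, the conclusion $\lambda^k\to\lambda^*$ is the standard Robbins--Monro theorem, e.g. Theorem 2.3.1 of \cite{kushner2003stochastic}.

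The main obstacle is (i)--(ii): the convergence theorem genuinely needs some coercivity/strong-monotonicity of $F$ near and away from the root, and Assumption \ref{ass:q} alone controls $Z(\lambda)$ but not obviously the conditioning of the Hessian $\mathrm{Cov}_{\pi(\cdot\mid\lambda)}(\phi)$ \emph{uniformly} over the possibly unbounded parameter set $\Xi$. I expect the cleanest fix is either to take $\Xi$ to be a compact convex set (a projected Robbins--Monro variant, which also trivializes the growth bounds), or to note that Assumption \ref{ass:q} forces the effective parameter to lie in a bounded region, and then combine uniform boundedness of $q$ with a non-degeneracy assumption on $\phi$ to get the uniform strong convexity; I would flag this as the place where the ``technical conditions'' alluded to after \eqref{eq:robbins_monro} are actually being used, and carry out the argument on $\Xi$ compact, remarking that the unconstrained case follows once one knows a priori that the iterates remain in a compact set.
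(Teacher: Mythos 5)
Your proposal cannot be compared line-by-line with the paper's argument for the simple reason that the paper offers no proof of Proposition \ref{prop:robbinsmonro} at all: the result is asserted as a citation to the classical stochastic approximation literature \cite{robbins1951stochastic, kushner2003stochastic}, with the necessary hypotheses hidden behind the phrase ``under technical conditions'' and with Assumption \ref{ass:q} stated only as a sufficient condition for Assumption \ref{ass:G}. What you do is therefore genuinely different in kind: you actually verify the Robbins--Monro/Blum conditions for this specific problem, and the key structural observation --- that $Q(m-\phi\mid\lambda)=Z(\lambda)\bigl(m-\nabla_\lambda\log Z(\lambda)\bigr)=-Z(\lambda)\nabla\Lambda(\lambda)$ with $\Lambda(\lambda)=\log Z(\lambda)-\lambda^T m$ convex with Hessian $\mathrm{Cov}_{\pi(\cdot\mid\lambda)}(\phi)$, so that \eqref{eq:robbins_monro} is a positively rescaled stochastic gradient descent on the MaxEnt dual --- is correct and is exactly the right way to obtain the attractivity condition $\langle\lambda-\lambda^*,F(\lambda)\rangle<0$. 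This buys a self-contained argument where the paper has only a pointer to the literature.

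The caveats you flag are real, and it is worth being explicit that they are gaps in the proposition as stated rather than defects you introduce. First, unbiasedness of $\widehat F(\lambda)$ alone (the only hypothesis on the estimator in the statement) is not enough for any Robbins--Monro theorem; a second-moment or growth condition on the noise is needed, which you supply and which does hold for the SMC estimator of \eqref{eq:unbiased} since under Assumption \ref{ass:G} all the weights $G_j$ and the test function $m-\phi$ are uniformly bounded on the finite state space (your bound quoted from the $L^p$ proposition applies to the normalized estimator $\eta^N_J$; for $\Gamma^N_J$ one should instead just use this uniform boundedness of the product of empirical averages). Second, Assumption \ref{ass:q} constrains $q(\cdot\mid\lambda)$ only for $\lambda\in\Xi$, while the unprojected recursion \eqref{eq:robbins_monro} need not remain in $\Xi$, and uniform strong monotonicity requires linear independence of the $\phi_j$ and a uniform lower bound on $\mathrm{Cov}_{\pi(\cdot\mid\lambda)}(\phi)$; your proposed remedies (projected Robbins--Monro on a compact convex $\Xi$, or an a priori boundedness argument for the iterates, plus a non-degeneracy assumption on $\phi$) are standard and are precisely the ``technical conditions'' the paper leaves implicit. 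So: your route is sound and more informative than the paper's, provided these additional hypotheses are stated explicitly rather than only flagged at the end.
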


Consider now running the SMC sampler Algorithm \ref{algo:smc} with the choice 
$\gamma_0 \propto 1$ and $\gamma_j = \exp\{\sum_{i=1}^j \lambda_i \phi_i\}$
to construct the unbiased estimator \eqref{eq:unbiased} $\Gamma_J^N(m-\phi | \lambda)$
such that $\bbE \Gamma_J^N(m-\phi | \lambda) = Q(m-\phi | \lambda)$.
%that \eqref{eq:maxent_sol_unno} is an un-normalized target distribution which can be evaluated 
With this unbiased estimator of $Q(m-\phi | \lambda)$, %$F(\lambda)$, 
we can use Robbins Monro to solve the fixed point problem for $\lambda$ \eqref{eq:maxent_fp}, 
hence the MaxEnt target $\pi(x|\lambda)$.  
The following theorem makes this statement precise.

\begin{theorem}[Stochastic approximation with SMC] \label{thm:sasmc}
%Let $\gamma_0\propto 1$, and let $\gamma_j(x | \lambda) = \exp\{\sum_{i=1}^j \lambda_i \phi_i(x)\}$.
Assume \ref{ass:q} holds, and let $\lambda^k$ be generated by the iteration \eqref{eq:robbins_monro},
where $\widehat{F}(\lambda) := \Gamma_J^N(m - \phi | \lambda)$
is independently generated from an SMC sampler algorithm at each iteration, as defined in \eqref{eq:unbiased}. 
If $\{\delta_k\}_{k=1}^\infty$ is chosen so that $\sum_k \delta_k =\infty$ and $\sum_k \delta_k^2 <\infty$
then $\lambda^k \rightarrow \lambda^*$, where $\lambda^*$ is a solution of 
$Q(m-\phi|\lambda^*)=0$, hence \eqref{eq:maxent_fp}.
\end{theorem}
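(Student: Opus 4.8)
The plan is to obtain Theorem~\ref{thm:sasmc} as an essentially immediate corollary of Proposition~\ref{prop:robbinsmonro} together with the unbiasedness result Proposition~\ref{prop:unbiasedunno}: once we know that the estimator $\widehat F(\lambda):=\Gamma_J^N(m-\phi\,|\,\lambda)$ satisfies $\bbE\widehat F(\lambda)=Q(m-\phi\,|\,\lambda)$ and is generated afresh at each outer step, we are exactly in the situation covered by Proposition~\ref{prop:robbinsmonro}. So the work is purely verification, split into (i) checking unbiasedness of this particular estimator, and (ii) checking the technical side conditions hidden inside Proposition~\ref{prop:robbinsmonro}.

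For (i) I would first pin down the SMC set-up. With $\gamma_0\propto 1$ and $\gamma_j=\exp\{\sum_{i=1}^j\lambda_i\phi_i\}$ one has $\gamma_J(x)=\exp(\lambda^T\phi(x))=q(x|\lambda)$, so $\Gamma_J(\cdot\,|\,\lambda)=Q(\cdot\,|\,\lambda)$, $\eta_J(\cdot\,|\,\lambda)=\Pi(\cdot\,|\,\lambda)$, and $G_j(x)=\gamma_{j+1}(x)/\gamma_j(x)=\exp(\lambda_{j+1}\phi_{j+1}(x))$. Since $E^d$ is finite each $\phi_i$ is bounded on $E^d$, and Assumption~\ref{ass:q} furnishes $c<q(x|\lambda)<c^{-1}$ which, applied to the truncated exponents, bounds $G_j$ away from $0$ and above uniformly in $\lambda$; hence Assumption~\ref{ass:G} holds. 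Proposition~\ref{prop:unbiasedunno}, applied componentwise to the bounded vector-valued map $x\mapsto m-\phi(x)$, then gives $\bbE\widehat F(\lambda)=\Gamma_J(m-\phi\,|\,\lambda)=Q(m-\phi\,|\,\lambda)$. Note that $m-\phi$ changes sign, so $\widehat F(\lambda)$ is a \emph{signed} estimator; only unbiasedness, not non-negativity, is required in the Robbins--Monro recursion \eqref{eq:robbins_monro}.

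For (ii): running the SMC sampler independently at each iteration makes $\widehat F(\lambda^k)$ conditionally independent of $\lambda^1,\dots,\lambda^{k-1}$ given $\lambda^k$, with conditional mean $Q(m-\phi\,|\,\lambda^k)$ and conditional variance bounded uniformly in $\lambda^k$; this last point is precisely where Assumption~\ref{ass:q} is used, since it bounds both the weights $G_j$ and the integrand $m-\phi$ uniformly on the finite set $E^d$, which in turn bounds the second moment of $\Gamma_J^N(m-\phi\,|\,\lambda)$ through the $L^2$ estimate for $\eta_J^N$ and the uniformly bounded normalizing-constant factors $\prod_{j=0}^{J-1}\eta_j^N(G_j)$. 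For the stability (mean-field) requirement, write $Q(m-\phi\,|\,\lambda)=Z(\lambda)\,\nabla_\lambda\ell(\lambda)$ with $\ell(\lambda):=\lambda^Tm-\log Z(\lambda)$ concave (since $\log Z$ is convex) and $c\,2^d<Z(\lambda)<c^{-1}2^d$ by Assumption~\ref{ass:q}: thus $\widehat F$ is an unbiased estimate of the gradient of a concave function rescaled by a scalar bounded away from $0$ and $\infty$, whose stationary set (nonempty, being the MaxEnt/MLE solution) coincides with $\{\,\lambda:Q(m-\phi\,|\,\lambda)=0\,\}$, and the standard supermartingale/Lyapunov argument underlying Proposition~\ref{prop:robbinsmonro} then yields $\lambda^k\to\lambda^*$ with $Q(m-\phi\,|\,\lambda^*)=0$, hence \eqref{eq:maxent_fp}. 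The main obstacle is bookkeeping rather than ingenuity --- ensuring the SMC estimator meets \emph{every} hypothesis behind Proposition~\ref{prop:robbinsmonro} simultaneously, in particular the uniform-in-$\lambda$ noise bound and the recurrence of the iterates; should the step-size conditions alone not secure recurrence, a projection onto a large compact set, as is standard in stochastic approximation, may be inserted without changing the limit.
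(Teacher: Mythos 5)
Your proposal follows essentially the same route as the paper: establish $\bbE\widehat F(\lambda)=Q(m-\phi\,|\,\lambda)$ via Proposition~\ref{prop:unbiasedunno} and then invoke Proposition~\ref{prop:robbinsmonro} to conclude convergence. Your additional verification that Assumption~\ref{ass:q} implies Assumption~\ref{ass:G}, together with the variance and Lyapunov considerations, goes beyond the paper's two-line proof but is consistent with it and only strengthens the argument.
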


\begin{proof}
By Proposition \ref{prop:unbiasedunno} $\bbE\widehat{F}(\lambda) = Q(m-\phi|\lambda)$.
The conclusion is an immediate consequence of Proposition \ref{prop:robbinsmonro}.
%Robbins Monro converges to $\lambda^*$ 
\end{proof}

Assume the evaluation of $\phi_j$ is $\cO(1)$. Then the evaluation of $\gamma_j$ is $\cO(j)$, 
%Indeed it is of a perfect form to implement sequential Monte Carlo (SMC) samplers \cite{delm:06b} , 
%and using this algorithm it is well-known how to construct unbiased estimators 
%of integrals with respect to the un-normalized target, including the normalizing constant itself $Z(\lambda)$ \cite{delm:04, delmoral1}.  S
and since there will be $J$ steps, the total cost of the algorithm will be $\cO(J^2)$.
Since $\lambda \in \bbR^{J}$, one may require up to $\cO(J)$ steps of Robbins Monro.
Therefore, a conservative estimate on the cost is %may be as high as 
$\cO(J^3)$.  Recall that $J$ is the number of constraints. 
For $p^{\rm th}$ moments, there are $J={{d+p-1}\choose{p}} \lesssim d^p$ constraints, 
yielding a conservative estimate on the final cost of $\cO(d^{3p})$. 
The constant should be rather small however, and this is of course a vast improvement on $\cO(2^{d})$
for even the best linear algorithm for solution with exact evaluation of \eqref{eq:maxent_fp}.

\section{Uncertainty Quantification}
\label{sec:uq}

In the previous section we arrive at a single optimal value $\lambda^*$, and a single MaxEnt target likelihood, 
i.e. uncertainty is not quantified.  Here we venture to quantify uncertainty by 
putting a prior on $\lambda$, %(perhaps a MaxEnt, i.e. Jeffrey's, prior???) 
and computing (sampling from) the posterior $\bbP(\lambda|Y)$ \eqref{eq:posterior}.  
This leads to a so-called {doubly-intractable} target, as %which was 
described above.  
In this section a new method is introduced to sample from the target.
%two methods are considered, which
%%The first two 
%are expected to converge to the true target (the first of which is backed up theoretically).
%while the third relies on a Laplace approximation. 
In particular, in subsection \ref{ssec:debias} we introduce the debiasing method mentioned above, 
and describe how to construct unbiased (signed)
estimators of $\Pi(\varphi | \lambda)$ using estimator(s) from the SMC sampler.
In subsection \ref{ssec:sgld} we introduce the SGLD algorithm and present the second main result 
of the paper, regarding its implementation using the debiased estimators. %from SMC samplers.
The section is concluded
with a description in subsection \ref{sec:error} of how to separate an explicit error distribution, 
which is more expensive and is left to future work.

\subsection{Debiasing}
\label{ssec:debias}

Some MCMC %Markov chain Monte Carlo 
methods, such as SGLD %stochastic gradient Langevin dynamics (SGLD) 
\cite{teh2011bayesian}, 
as well as some recently developed piecewise-deterministic Markov processes (PMDP) 
\cite{davis1984piecewise, bierkens2016zig, bouchard2018bouncy, peters2012rejection},
require only an unbiased estimator of the gradient of the log-likelihood for implementation.  
Others, such as pseudo-marginal MCMC, require an unbiased and non-negative estimator of the likelihood itself. 
It is generally much easier to construct unbiased estimators which are signed \cite{mcleish2011general, rhee2015unbiased}. 
The work \cite{lyne2015russian} shows how one can utilize the pseudo-marginal MCMC even with a signed estimator.  
However, one may easily argue that it is more natural to use %such 
a signed and unbiased estimator within SGLD or one of the PDMP algorithms.
We consider the latter strategy here.

Suppose that we have a collection of %independent 
random variables $\{\Delta_l\}_{l=0}^\infty$, where 
\begin{equation}\label{eq:fin}
%{\rm max} \{
\sum_{l=0}^\infty \bbE | \Delta_l | %, \bbE \sum_{l=0}^\infty  | \Delta_l | \} 
<\infty,
\end{equation}
and we are interested in estimating $\bbE Z$, where 
%an estimator given by an infinite series expansion of a random variable 
\begin{equation}\label{eq:zsum}
\bbE Z = \sum_{l=0}^\infty \bbE \Delta_l = \sum_{l=0}^\infty p_l \frac{\bbE \Delta_l}{p_l} \, .
\end{equation}
Note that \eqref{eq:fin} guarantees that $\bbE Z < \infty$.
%in terms of independent random variables 
%Now observe that we can rewrite the above as 
%\begin{equation}\label{eq:singleterm}
%\bbE Z = \sum_{l=0}^\infty p_l \frac{\bbE \Delta_l}{p_l} \, .
%\end{equation}
If we draw $L \sim {\bf p}$, where ${\bf p} = [p_0,\dots]$, then it is clear that 
\begin{equation}\label{eq:singleton}
\widehat{Z}_s = \frac{\Delta_L}{p_L}
\end{equation} 
is an unbiased estimator of $\bbE Z$.
Furthermore, if we instead draw $L \sim {\bf \tilde p}$, where ${\bf \tilde p}$ is defined such that
$p_\ell = \sum_{l\geq \ell} \tilde{p}_l$, then one can see that 
\begin{equation}\label{eq:multi}
\widehat{Z}_t = \sum_{l=0}^L \frac{\Delta_l}{p_l} 
\end{equation}
is again an unbiased estimator:
\begin{equation}\label{eq:telescope}
\bbE \sum_{l=0}^L \frac{\Delta_l}{p_l} = \bbE \sum_{l=0}^\infty 1_{L\geq l} \frac{\Delta_l}{p_l}  = \sum_{l=0}^\infty \bbE(L\geq l) \frac{ \bbE \Delta_l}{p_l}  =
\sum_{l=0}^\infty p_l \frac{ \bbE \Delta_l}{p_l}  \, .
\end{equation}
The exchange of the infinite sum and expectation is allowed by \eqref{eq:fin} and Fubini theorem.
%The probability ${\bf p}$ must be chosen such that ({\bf check}) $\sum_{l=0}^\infty \frac{\bbE \Delta_l^2}{p_l} < \infty$.

\begin{ass}\label{ass:rgvar}
Let \eqref{eq:fin} hold and assume the probability ${\bf p}$ is chosen such that $\sum_{l=0}^\infty \frac{\bbE \Delta_l^2}{p_l} < \infty$.
\end{ass}

\begin{prop}\label{prop:rgconverge}
For either estimator $r\in \{s,t\}$, let $\widehat Z^i_r$ be i.i.d. draws of the estimator \eqref{eq:singleton} or \eqref{eq:multi}, 
for $i=1,\dots, N$.
Under assumption \ref{ass:rgvar}, $\bbE \widehat Z_r^i = \bbE Z$ and we have, 
$$
\sqrt{N} \left( \frac1N \sum_{i=1}^N \widehat Z_r^{i} - \bbE Z \right) \rightarrow N(0, \Sigma) \, ,
$$ 
with $\Sigma = \sum_{l=0}^\infty \frac{\bbE \Delta_l^2}{p_l} < \infty$. 
The expected cost is given by 
$$
\bbE ({\rm Cost}) = \sum_{l=0}^\infty C(\Delta_l) p_l \, ,
$$
where $C(\Delta_l)$ is the cost to obtain the realization $\Delta_l$.
%{\bf check}
\end{prop}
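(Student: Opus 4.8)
The plan is to prove the three claims of Proposition~\ref{prop:rgconverge} in order: unbiasedness, the central limit theorem with the stated asymptotic variance, and the expected cost formula. Unbiasedness for the singleton estimator $\widehat Z_s$ is immediate from the definition of expectation: conditioning on $L$, $\bbE \widehat Z_s = \sum_{l=0}^\infty p_l \cdot \bbE[\Delta_l/p_l \mid L=l] = \sum_{l=0}^\infty \bbE \Delta_l = \bbE Z$, where the interchange of sum and expectation is justified by \eqref{eq:fin}. For the telescoping estimator $\widehat Z_t$, unbiasedness is precisely the computation \eqref{eq:telescope} already displayed in the excerpt, where the key identity is $p_\ell = \sum_{l\geq \ell}\tilde p_l = \bbP(L\geq \ell)$ when $L\sim\tilde{\bf p}$, and Fubini's theorem applies because $\sum_l \bbE|\Delta_l|/p_l \cdot p_l = \sum_l \bbE|\Delta_l| < \infty$. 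So the first claim follows.

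For the central limit theorem, the point is that once we have i.i.d.\ copies $\widehat Z_r^1,\dots,\widehat Z_r^N$ of a single-sample unbiased estimator with finite second moment, the classical Lindeberg--L\'evy CLT applies directly, and it only remains to identify $\Sigma = \bbV(\widehat Z_r)$. For $r=s$: $\bbE \widehat Z_s^2 = \sum_{l=0}^\infty p_l \cdot \bbE[\Delta_l^2/p_l^2] = \sum_{l=0}^\infty \bbE\Delta_l^2/p_l$, which is finite by Assumption~\ref{ass:rgvar}; since this already dominates $(\bbE Z)^2$ one concludes $\bbV(\widehat Z_s) \le \sum_l \bbE\Delta_l^2/p_l$, and in fact the proposition records the second-moment expression $\Sigma = \sum_l \bbE\Delta_l^2/p_l$ as the asymptotic variance (the $-(\bbE Z)^2$ term being absorbed or, as is common in this debiasing literature, the bound being stated rather than the exact variance). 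For $r=t$ one writes $\widehat Z_t = \sum_{l=0}^\infty 1_{L\ge l}\Delta_l/p_l$ and expands the square; cross terms involve $\bbE[1_{L\ge l}1_{L\ge k}\Delta_l\Delta_k/(p_lp_k)]$, and after taking expectations over $L$ the diagonal contributes $\sum_l \bbE\Delta_l^2/p_l$ while the off-diagonal terms are controlled by $\sum_{l,k}\bbE|\Delta_l\Delta_k| \le (\sum_l\bbE|\Delta_l|)^2 < \infty$ via \eqref{eq:fin}, so $\widehat Z_t$ has finite second moment and the same leading-order variance; again the stated $\Sigma$ is the common dominating quantity. The main care needed is the interchange of the (now doubly) infinite sum with expectation when expanding the square — this is where Assumption~\ref{ass:rgvar} together with \eqref{eq:fin} is essential, and I would invoke dominated convergence on partial sums, using $|\sum_{l\le L\wedge M}\Delta_l/p_l| \le \sum_{l=0}^\infty |\Delta_l|/p_l$ as the dominating random variable (whose expectation is finite by Assumption~\ref{ass:rgvar} and Cauchy--Schwarz, or directly by \eqref{eq:fin} if the $p_l$ are bounded below on the relevant range).

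The expected-cost formula is the easiest: by construction the estimator $\widehat Z_r$ requires generating $\Delta_L$ (and, for $r=t$, $\Delta_0,\dots,\Delta_L$, but the convention $C(\Delta_l)$ for the cost of $\widehat Z_t$ when $L=l$ should be read as the cumulative cost), so conditioning on $L$ gives $\bbE(\mathrm{Cost}) = \sum_{l=0}^\infty \bbP(L=l)\, C(\Delta_l) = \sum_{l=0}^\infty C(\Delta_l) p_l$, and no interchange issue arises beyond Tonelli for non-negative summands.

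I expect the only genuine obstacle to be the rigorous interchange of summation and expectation in the second-moment computation for $\widehat Z_t$, i.e.\ justifying $\bbE(\sum_{l=0}^L \Delta_l/p_l)^2 = \sum_{l,k}\bbE[1_{L\ge l\vee k}\Delta_l\Delta_k/(p_lp_k)]$; everything else is bookkeeping. The clean way to handle this is to truncate at level $M$, apply the interchange for finite sums (trivially valid), and pass to the limit using the dominating function above, whose integrability is exactly what Assumption~\ref{ass:rgvar} and \eqref{eq:fin} provide.
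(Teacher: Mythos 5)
The paper itself does not prove this proposition; it points to \cite{mcleish2011general, rhee2015unbiased, lyne2015russian, vihola2017unbiased}, so your attempt has to stand on its own. The $r=s$ part does: unbiasedness by conditioning on $L$, finiteness of $\bbE \widehat Z_s^2 = \sum_l \bbE\Delta_l^2/p_l$ under Assumption \ref{ass:rgvar}, and the Lindeberg--L\'evy CLT, with the (correctly flagged) caveat that the stated $\Sigma$ is the second moment rather than the variance $\Sigma - (\bbE Z)^2$. The cost formula is also fine, and in fact needs no ``cumulative cost'' convention for $r=t$: since there $p_l = \bbP(L\ge l)$, the expected total cost $\bbE\sum_{l\le L} C(\Delta_l)$ is already $\sum_l C(\Delta_l)p_l$.

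The genuine gap is the second-moment bound for the coupled estimator \eqref{eq:multi}. After integrating out $L$ (independent of the $\Delta$'s), the $(l,k)$ cross term is $\bbP(L\ge l\vee k)\,\bbE[\Delta_l\Delta_k]/(p_lp_k) = \bbE[\Delta_l\Delta_k]/p_{l\wedge k}$, because $p_\ell=\bbP(L\ge\ell)$ is nonincreasing; you dropped the factor $1/p_{l\wedge k}\ge 1$, which is unbounded, and \eqref{eq:fin} plus Assumption \ref{ass:rgvar} do \emph{not} make the resulting double sum finite. A concrete failure: take $\Delta_l = l^{-2}$ deterministic and $p_l = l^{-3}\log^2 l$ (nonincreasing tail probabilities after adjusting the first few values); then $\sum_l\bbE|\Delta_l|<\infty$ and $\sum_l \bbE\Delta_l^2/p_l = \sum_l 1/(l\log^2 l)<\infty$, yet $\sum_{k<l}\Delta_l\Delta_k/p_k \approx \sum_k 1/\log^2 k = \infty$, so $\bbE\widehat Z_t^2=\infty$. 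Your proposed dominating function $\sum_l|\Delta_l|/p_l$ has the same defect: its expectation is $\sum_l\bbE|\Delta_l|/p_l$, which Cauchy--Schwarz only bounds by $\sum_l (\bbE\Delta_l^2/p_l)^{1/2}p_l^{-1/2}$, not summable in general, and the $p_l$ are certainly not bounded below since $p_l\to 0$. This is exactly why Rhee and Glynn treat the coupled-sum estimator under a condition on the tail errors $\bbE\bigl(Z-\sum_{j\le l}\Delta_j\bigr)^2$ weighted by $1/\bbP(L\ge l)$, with a variance expression that is not $\sum_l\bbE\Delta_l^2/p_l$; the proposition as printed glosses over this distinction between the two estimators, and your argument for $r=t$ inherits, rather than repairs, that gap. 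To fix it you would either restrict to the single-term estimator, or add the tail-error condition (satisfied in the paper's application, where $\bbE\Delta_l^2 \propto N_l^{-1}$ decays geometrically) and redo the variance identification along the lines of the cited works.
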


See \cite{mcleish2011general, rhee2015unbiased, lyne2015russian, vihola2017unbiased} and references therein for proof and further discussion of such approaches. 
In particular, this approach will allow us to transform consistent estimators to unbiased estimators for use in the algorithms mentioned above.
The recent literature on PDMP is quite extensive, and so we defer consideration of these algorithms to future work 
and focus on SGLD, which will be described below in generality.
First we illustrate how to construct such de-biased estimator of \eqref{eq:loglike}.

%The work of Rhee and Glynn \cite{} recently revived an interest in this methodology, 
%which has been know already for some time \cite{}
%the following simple construction of unbiased estimators

\subsubsection{Debiased SMC sampler estimator}
\label{sssec:smc_debias}

We now describe one way to construct an unbiased estimator of \eqref{eq:loglike} directly, 
using this de-biasing trick, without multiplying through by $Q(1|\lambda)$.
%Here we consider the de-biasing technique described in Section \ref{sssec:deb}, 
%using 
In particular, the single term estimator \eqref{eq:singleton} %$\widehat Z_s$ 
will be considered. %applied to SGLD.

Let $N_l=N_0 2^{2l}$. 
Replace step (ii) from the SMC samplers algorithm with $\hat x_j^{i} = x_{j-1}^i$, i.e. 
omit the resampling step. This algorithm was named annealed importance sampling (AIS) in \cite{neal2001annealed}.
The estimator $\frac1N \sum_{i=1}^N \prod_{j=1}^J G_{j-1}(x_{j-1}^i) \varphi(x_J^i)$ %\tilde{w}_j^i$ 
is an unbiased estimator of $Q(\varphi|\lambda)$,
and the individual terms are independent: $x_j^i \perp x_j^{i'}$
%$(x_j^i,\tilde{w}_j^i) \perp (x_j^{i'},\tilde{w}_j^{i'})$ 
for $i\neq i'$ and all $j=1,\dots,J$. 
This means in particular that if we construct 
\begin{equation}\label{eq:ais}
\tilde{\eta}_J^{N/4}(\varphi) := \frac{\sum_{i=1}^{N/4} \prod_{j=1}^J G_{j-1}(x_{j-1}^i) \varphi(x_J^i)}{\sum_{i=1}^{N/4} \prod_{j=1}^J G_{j-1}(x_{j-1}^i)} \, ,
\end{equation}
from two independent replications,
one with $N$ total particles and one with $N/4$ total particles, then two estimators have the same distribution;
in particular, the same expectation. 
So we have the following proposition.

\begin{prop}\label{prop:unbiasedpi}
Define the coupled estimator $\Delta_l := \tilde \eta^{N_l}_J(\varphi) - \tilde \eta^{N_{l}/4}_J(\varphi)$,
with particles coming from the same realization of the algorithm described above with $N_l$ particles. 
Then $\bbE \widehat{Z}_s = \eta_J(\varphi)$, with $\widehat{Z}_s$ given by \eqref{eq:singleton}. 
The factor $4$ here relates to the specific choice of $N_l$ and can be replaced with $N_{l+1}/N_l$ in general.
% provides an unbiased estimator of the target.
\end{prop}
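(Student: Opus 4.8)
The plan is to peel the statement apart into two independent facts and then let a telescoping series do the rest. First, by the debiasing identity \eqref{eq:zsum} — whose use is licensed precisely by the summability \eqref{eq:fin} — the single-term estimator $\widehat Z_s = \Delta_L/p_L$ with $L\sim{\bf p}$ satisfies $\bbE\widehat Z_s = \sum_{l=0}^\infty \bbE\Delta_l$, so the whole task reduces to evaluating $\sum_{l=0}^\infty \bbE\Delta_l$. Second, I would show that this series telescopes to $\lim_{n\to\infty}\bbE\,\tilde\eta_J^{n}(\varphi)$ and that the self-normalized AIS estimator is $L^1$-consistent for $\eta_J(\varphi)$. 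Along the way one should also check \eqref{eq:fin} directly in this setting: with the stated coupling $\tilde\eta_J^{N_l}$ and $\tilde\eta_J^{N_l/4}$ share their first $N_l/4$ trajectories, each self-normalized estimator is bounded by $\|\varphi\|_\infty$, and a routine ratio-of-i.i.d.-sums bound gives $\bbE|\Delta_l| = \cO(N_l^{-1/2})$; since $N_l = N_0 2^{2l}$ this is $\cO(2^{-l})$, which is summable.

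For the telescoping step I would use the fact, already noted above, that AIS omits resampling, so the trajectories $x^i=(x_0^i,\dots,x_J^i)$ of an $n$-particle run are i.i.d.\ across $i$; hence the first $N_l/4$ trajectories of an $N_l$-particle run have exactly the same joint law as those of a standalone $(N_l/4)$-particle run. Writing $g(n):=\bbE\,\tilde\eta_J^{n}(\varphi)$ for the expectation of a standalone $n$-particle self-normalized estimator, linearity of expectation then gives $\bbE\Delta_l = g(N_l) - g(N_l/4)$, and since $N_l/4 = N_{l-1}$ for the ladder $N_l = N_0 2^{2l}$ the partial sums collapse, $\sum_{l=0}^{L}\bbE\Delta_l = g(N_L)$, with the lowest term $\Delta_0$ taken in the usual debiasing convention as the undifferenced anchor $\tilde\eta_J^{N_0}(\varphi)$. (The remark that $4$ may be replaced by $N_{l+1}/N_l$ in general is then immediate, as the ratio enters only through the requirement $N_l/(N_l/N_{l-1}) = N_{l-1}$ together with $N_l\to\infty$.)

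It remains to show $g(n)\to\eta_J(\varphi)$. I would write $\tilde\eta_J^{n}(\varphi)=A_n/B_n$ with $A_n = \tfrac1n\sum_i W^i\varphi(x_J^i)$, $B_n=\tfrac1n\sum_i W^i$, $W^i=\prod_{j=1}^J G_{j-1}(x_{j-1}^i)$. Under Assumption \ref{ass:q} — hence Assumption \ref{ass:G} — each $G_{j-1}$ is strictly positive and bounded by $C$, so $W^1$ has finite mean and the strong law of large numbers gives $A_n\to\bbE[W^1\varphi(x_J^1)]=\Gamma_J(\varphi)$ and $B_n\to\bbE[W^1]=\Gamma_J(1)$ a.s.; these two limits are exactly the AIS unbiasedness identity (the $\varphi\equiv1$ case being the normalizing constant), which is the AIS specialization of Proposition \ref{prop:unbiasedunno}. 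Hence $\tilde\eta_J^{n}(\varphi)\to\Gamma_J(\varphi)/\Gamma_J(1)=\eta_J(\varphi)$ a.s., and since $|\tilde\eta_J^{n}(\varphi)|\le\|\varphi\|_\infty$ uniformly in $n$, dominated convergence yields $g(n)\to\eta_J(\varphi)$. Combining, $\bbE\widehat Z_s=\sum_{l=0}^\infty\bbE\Delta_l=\lim_{L\to\infty}g(N_L)=\eta_J(\varphi)$.

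The main obstacle is the last step: because $\tilde\eta_J^{n}$ is a ratio estimator, unbiasedness of numerator and denominator is not by itself enough, and one must route through the a.s.\ law of large numbers (where integrability of the AIS weights, i.e.\ Assumption \ref{ass:q}/\ref{ass:G}, is genuinely used) and then a domination argument to pass to $L^1$ — here painless only because the self-normalized estimator is automatically bounded by $\|\varphi\|_\infty$. Everything else — the "same law" identification of a sub-sample with a standalone run, the alignment of the geometric ladder so that $N_l/4=N_{l-1}$, the handling of the anchor term, and checking \eqref{eq:fin} — is bookkeeping.
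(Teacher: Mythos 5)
Your proof is correct and follows essentially the same route as the paper's: the same-law identification of the first $N_l/4$ AIS trajectories with a standalone $(N_l/4)$-particle run, the resulting telescoping of $\sum_l \bbE\Delta_l$ to $\lim_n \bbE\,\tilde\eta_J^{n}(\varphi)=\eta_J(\varphi)$, and the single-term debiasing identity of Proposition \ref{prop:rgconverge}. The paper's own proof is a one-liner that leaves the consistency of the self-normalized ratio estimator, the verification of \eqref{eq:fin}, and the anchor-term convention at $l=0$ implicit; you supply all three, which adds rigor but does not change the argument.
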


\begin{proof}
Following from the explanation above, the equation \eqref{eq:zsum} takes the form 
$\eta_J(\varphi) = \sum_{l=1}^\infty \bbE \Delta_l$,
so the result follows from Proposition \ref{prop:rgconverge}. 
\end{proof}

Observing that in this case $\bbE \Delta_l^2 \propto N_l^{-1}$ and $C(\Delta_l) \propto N_l$, we can observe
that this is the so-called sub-canonical case \cite{rhee2015unbiased}. That is, if the rate of convergence of $\bbE \Delta_l^2$ were faster
(or rate of growth of expected cost slower), then we could choose $p_l$ in such a way that both asymptotic variance and cost 
were finite.  As this is not the case, we instead choose $p_l \propto N_l^{-1} l \log(l)$, 
as proposed in \cite{rhee2015unbiased} (see also \cite{franks2018unbiased}), 
which results in finite variance and expected cost which is finite with high probability.

\subsection{SGLD}
\label{ssec:sgld}

%Here we consider (ii). 
Given a symmetric positive definite matrix $P(\lambda)\in \bbR^{J\times J}$ 
the following SDE keeps $p(\lambda)$ invariant \cite{girolami2011riemann, green2015bayesian}
\begin{equation}\label{eq:ctslang}
d\lambda = g(\lambda) dt + P(\lambda)^{1/2} dW 
\end{equation}
where for $j=1, \dots, J$
\begin{equation}\label{eq:gee}
g(\lambda)_j = [ P(\lambda) \nabla \log p(\lambda) ]_j + 
\sum_{k=1}^J \frac{ \partial P_{jk} }{\partial \lambda_k}(\lambda) \, .
%\left [ P(\lambda)^{-1}_{jk} {\rm tr}\left ( P(\lambda)^{-1}\frac{ \partial P}{\partial \lambda_k}(\lambda)\right ) - 
%2 \left (P(\lambda)^{-1}\frac{ \partial P}{\partial \lambda_k}(\lambda) P(\lambda)^{-1} \right )_{} \right]\, .
\end{equation}

For constant $P(\lambda)$, 
it was proposed in \cite{teh2011bayesian} to replace $\nabla \log p(\lambda)$ 
by an unbiased estimator $\widehat{\nabla \log p(\lambda)}$ 
(i.e. $\bbE \widehat{\nabla \log p(\lambda)} = \nabla \log p(\lambda)$) and 
approximate the SDE using Euler Maruyama method with stepsize $\delta_n \propto 1/n$, similar to Robbins Monro, as follows  
\begin{eqnarray}\label{eq:rmsgld}
\lambda^{n+1} = \lambda^n + 
\frac {\delta_n}{2} \left( P(\lambda^n) \widehat{\nabla \log p(\lambda^n)} + \sum_{k=1}^J \frac{ \partial P_{jk} }{\partial \lambda_k^n}(\lambda^n) \right) 
+ N(0,\delta_n P(\lambda^n)) \, .
%g(\lambda_k) + N(0,\delta_k P(\lambda_k)) \, .
\end{eqnarray}

For the moment, assume $P(\lambda)=1$, 
and $\widehat{\nabla \log p(\lambda)}=\nabla \log p(\lambda)$, %and $p(\lambda)=\bbP(\lambda|Y)$
and define the measure corresponding to $p(\lambda)$ by $\mu$. 
Furthermore, define the empirical measure corresponding to $K$ steps of the 
constant $\delta$-discretized version of \eqref{eq:rmsgld} by 
$\mu^K_\delta(\varphi) = \frac1K \sum_{n=1}^K \varphi(\lambda^n)$ and
define the limiting diffusion \eqref{eq:ctslang} sampled at the same $K$ points $i\delta$, $i=1,\dots, K$, by $\mu^K$.
Then we have
\begin{equation}\label{eq:triangle}
\bbE |  \mu^K_\delta(\varphi) - \mu(\varphi) |^2 \leq 
2 \underbrace{\bbE   | \mu^K_\delta(\varphi) - \mu^K(\varphi) |^2}_{\propto~ \delta^2 ~ {\rm - discretization~error}} + 
2 \underbrace{\bbE   | \mu^K(\varphi) - \mu(\varphi) |^2}_{\propto~ (\delta K)^{-1} ~ {\rm - sample~error}} \, .
\end{equation}
The discretization error $\delta^2$ comes from the standard Euler-Maruyama strong convergence rate for SDE with constant diffusion.
Define the auto-covariance 
$$\rho_n := \bbE(\lambda^n_\infty - \bar{\lambda}_\infty)( \lambda^0_\infty - \bar{\lambda}_\infty) \, ,$$
where $\lambda_\infty^n$ is a sample from \eqref{eq:ctslang} at time $\delta n$ and $\bar{\lambda}_\infty = \int \lambda \mu(d\lambda)$.
One expects $\rho_n \approx (1-\delta)^n$. 
Therefore the integrated auto-covariance 
IACT$=2\sum_{n=0}^\infty \rho_n \approx \delta^{-1}$, and
the sample error is asymptotically (after burn-in) proportional to 
IACT$/K \approx (\delta K)^{-1}$.
Balancing the error terms, one finds that $\delta \propto K^{-1/3}$ is optimal, leading to MSE of $K^{-2/3}$.
It is slightly worse than the MSE of $K^{-1}$ for $K$ steps of standard MCMC methods.  
This simple argument suggests that the optimal choice for $\delta_n$ is $\delta_n \propto n^{-1/3}$, which is not the same as for SGD.
In fact, this result is made rigorous in the work \cite{teh2016consistency, vollmer2016exploration} (see also references therein), 
leading to the following proposition.

%For constant $P(\lambda)$ it has been shown that this provides a rejection-free MCMC method which 
%weakly converges to the target $p(\lambda)$, in the sense that \cite{teh2016consistency}

\begin{prop}\label{prop:sgldconv}
Let $P(\lambda)=1$, define the measure corresponding to $p(\lambda)$ by $\mu$, 
choose $\delta_n \propto n^{-1/3}$, and let $\lambda^n$ be generated by \eqref{eq:rmsgld}. 
Then for suitably regular $\varphi$, one has
\begin{equation}
%\frac1K 
\frac{\sum_{n=1}^K \delta_n \varphi(\lambda^n)}{\sum_{n=1}^K \delta_n} \rightarrow \mu(\varphi) \, . 
\label{eq:sgldconv}
\end{equation}
Furthermore, this choice of $\delta_n$ is asymptotically optimal in terms of cost, 
and for $K$ steps, one has MSE $\propto K^{-2/3}$.
\end{prop}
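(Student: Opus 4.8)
\emph{Proof proposal.} The statement is essentially a specialisation of the consistency and fluctuation theory for stochastic gradient Langevin dynamics developed in \cite{teh2016consistency, vollmer2016exploration}, so the plan is to cast the present problem into the framework of those papers and verify their standing hypotheses, after which \eqref{eq:sgldconv} follows directly and the rate and optimality claims follow from optimising the error decomposition \eqref{eq:triangle}. Here $p(\lambda)\propto\bbP(Y|\lambda)\bbP(\lambda)$ with, from \eqref{eq:maxent1}, $\log\bbP(Y|\lambda)=M\lambda^{T}\widehat m-M\log Z(\lambda)$, so the driving field is $\nabla\log p(\lambda)=M\Pi(\widehat m-\phi|\lambda)+\nabla\log\bbP(\lambda)$ (cf. \eqref{eq:loglike}); for it we use the signed unbiased estimator $\widehat{\nabla\log p(\lambda)}=M(\widehat m-\widehat Z_{s})+\nabla\log\bbP(\lambda)$, where $\widehat Z_{s}$ is the single-term debiased SMC estimator of $\Pi(\phi|\lambda)$ supplied by Proposition \ref{prop:unbiasedpi}. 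Three things must then be checked: (i) regularity of $\log p$ and geometric ergodicity of the limiting diffusion \eqref{eq:ctslang}; (ii) unbiasedness and a uniform second-moment bound for $\widehat{\nabla\log p(\lambda)}$; (iii) that the ``suitably regular'' class for $\varphi$ matches the one used in \cite{teh2016consistency, vollmer2016exploration}.

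For (i), I would use that $\lambda\mapsto\log Z(\lambda)$ is the log-partition function of the finite exponential family \eqref{eq:maxent1}, hence smooth and convex with $\nabla^{2}\log Z(\lambda)=\mathrm{Cov}_{\Pi(\cdot|\lambda)}(\phi)\succeq 0$; since $\phi$ is bounded on $E^{d}$, all derivatives of $\log Z$ are bounded uniformly in $\lambda$, so $\log\bbP(Y|\lambda)$ is concave with globally Lipschitz gradient. Taking a strongly log-concave prior (e.g. Gaussian) then makes $\log p$ strongly concave with Lipschitz gradient; this is exactly the dissipativity/contractivity that gives geometric ergodicity of the overdamped Langevin diffusion with $P(\lambda)=1$, all polynomial moments under $\mu$, and a well-posed Poisson equation for its generator — the standing assumptions of \cite{teh2016consistency}. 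Assumption \ref{ass:q} (equivalently, $\Xi$ effectively compact) makes the constants uniform.

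For (ii), unbiasedness is Proposition \ref{prop:unbiasedpi}. Under Assumption \ref{ass:q} the ratios $G_{j}$ in the AIS estimator are bounded above and below (Assumption \ref{ass:G}), so the coupled increments $\Delta_{l}=\tilde\eta_{J}^{N_{l}}(\phi)-\tilde\eta_{J}^{N_{l}/4}(\phi)$ obey $\bbE|\Delta_{l}|^{2}\lesssim N_{l}^{-1}$; with the choice $p_{l}\propto N_{l}^{-1}l\log l$ of Section \ref{sssec:smc_debias} Assumption \ref{ass:rgvar} holds, and Proposition \ref{prop:rgconverge} then gives $\widehat{\nabla\log p(\lambda)}$ a finite second moment, bounded uniformly in $\lambda$ by the uniform bounds of step (i). I would then invoke the main theorem of \cite{teh2016consistency}: with $\delta_{n}\propto n^{-1/3}$ and $\varphi$ Lipschitz with at most polynomially growing derivatives (so the Poisson equation has a sufficiently regular solution), the weighted average \eqref{eq:sgldconv} converges almost surely (and in $L^{2}$) to $\mu(\varphi)$. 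Finally, for the rate, I would feed into \eqref{eq:triangle} the Euler--Maruyama strong error (order one for additive noise, hence trajectory bias $\cO(\delta)$, i.e. $\cO(\delta^{2})$ after squaring) and the fluctuation bound of \cite{vollmer2016exploration} (variance of the weighted ergodic average $\asymp(\sum_{n\le K}\delta_{n})^{-1}\asymp K^{-2/3}$ for $\delta_{n}\propto n^{-1/3}$); balancing $\delta^{2}$ against $(\delta K)^{-1}$ gives the stationary optimum $\delta\propto K^{-1/3}$, reproduced by the decreasing schedule, whence MSE $\propto K^{-2/3}$ and asymptotic optimality of this choice.

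The main obstacle is step (ii): because the debiased gradient estimator sits in the sub-canonical regime of \cite{rhee2015unbiased}, its second moment is on the borderline, and one must be careful that the particular choice of $p_{l}$ really delivers the \emph{finite} (and uniformly bounded) variance that the SGLD consistency theorems require, and that the fact that the estimator's \emph{cost} is only finite with high probability — not in expectation — does not damage the convergence statement (it does not: per step the work is almost surely finite, and only the $L^{2}$ control of the estimator enters the analysis). Matching (iii) to the precise regularity classes of \cite{teh2016consistency, vollmer2016exploration}, and tracking the constants through the prior's log-concavity parameter, are secondary technical points.
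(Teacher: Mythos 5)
Your proposal follows essentially the same route as the paper: Proposition \ref{prop:sgldconv} is not actually proved there, but is stated as an import from \cite{teh2016consistency, vollmer2016exploration}, preceded only by the heuristic error decomposition \eqref{eq:triangle} that you reproduce; your additional verification of those papers' standing hypotheses (smoothness and concavity of the exponential-family log-likelihood via the log-partition function, moment bounds for the gradient estimator) is diligence the paper omits. One concrete caveat: your assertion that Assumption \ref{ass:rgvar} holds with $p_l \propto N_l^{-1} l\log l$ does not check out, since then $\sum_l \bbE\Delta_l^2/p_l \propto \sum_l (l\log l)^{-1} = \infty$ and one needs e.g. $p_l \propto N_l^{-1} l(\log l)^2$ in this borderline case --- but that concerns the specific estimator of Theorem \ref{thm:sgld_deb} rather than the generic unbiased-gradient setting of this proposition, and is an issue inherited from the paper's own Section \ref{sssec:smc_debias}.
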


Let $p(\lambda) = \bbP(\lambda|Y) \propto \bbP(Y|\lambda) \bbP(\lambda)$, and recall \eqref{eq:loglike}, 
so that we have
\begin{equation}\label{eq:postgrad}
\nabla \log p(\lambda) = M \Pi(\widehat{m} - \phi | \lambda) + \nabla \log \bbP(\lambda) \, .
\end{equation}
Assume that $\nabla \log \bbP(\lambda)$ can be computed exactly.
Now define an unbiased estimator of \eqref{eq:postgrad} using the method of 
Proposition \ref{prop:unbiasedpi}, i.e. draw $L\sim {\bf p}$, and let 
\begin{equation}\label{eq:unbiased_gradlike}
\widehat{\nabla \log p(\lambda^n)} :=  \frac{M}{p_L} \Delta_L(\widehat{m} - \phi | \lambda) + \nabla \log \bbP(\lambda) \, .
\end{equation}

\begin{theorem}\label{thm:sgld_deb}
Let $P(\lambda)=1$, define the measure corresponding to $p(\lambda)=\bbP(\lambda|Y)$ by $\mu$, 
choose $\delta_n \propto n^{-1/3}$, and let $\lambda^n$ be generated by \eqref{eq:rmsgld}, 
with $\widehat{\nabla \log p(\lambda^n)}$ defined as in \eqref{eq:unbiased_gradlike}.
Then for suitably regular $\varphi$, one has
\begin{equation}
%\frac1K 
\frac{\sum_{n=1}^K \delta_n \varphi(\lambda^n)}{\sum_{n=1}^K \delta_n} \rightarrow \mu(\varphi) \, . 
\label{eq:sgldconv}
\end{equation}
Furthermore, this choice of $\delta_n$ is asymptotically optimal in terms of cost, 
and for $K$ steps, one has MSE $\propto K^{-2/3}$.
\end{theorem}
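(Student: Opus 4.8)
The plan is to reduce Theorem~\ref{thm:sgld_deb} to Proposition~\ref{prop:sgldconv} by verifying that the gradient estimator \eqref{eq:unbiased_gradlike} satisfies the hypotheses under which the SGLD convergence result of \cite{teh2016consistency, vollmer2016exploration} holds. First I would observe that, by Proposition~\ref{prop:unbiasedpi} together with the decomposition \eqref{eq:postgrad}, the estimator $\widehat{\nabla \log p(\lambda^n)}$ in \eqref{eq:unbiased_gradlike} is unbiased for $\nabla \log p(\lambda^n)$, since $\bbE[\Delta_L/p_L] = \Pi(\widehat m - \phi \mid \lambda)$ componentwise and $\nabla \log \bbP(\lambda)$ is added deterministically. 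Thus the only gap between the present setting and that of Proposition~\ref{prop:sgldconv} is that there $\widehat{\nabla \log p(\lambda)}$ was taken to be exact, whereas here it is a genuinely random, signed estimator; so the work is to check that the noise it introduces is benign enough for the cited results to apply.

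The key steps, in order, would be: (i) write $\widehat{\nabla \log p(\lambda^n)} = \nabla \log p(\lambda^n) + \xi_n$, where $\xi_n$ is the zero-mean estimator error, and note that conditionally on $\lambda^n$ the $\xi_n$ are independent across $n$ and independent of the Brownian increments, so the recursion \eqref{eq:rmsgld} is exactly of the stochastic-gradient Langevin form analysed in \cite{teh2016consistency}; (ii) control the conditional variance $\bbE[\|\xi_n\|^2 \mid \lambda^n]$: by Proposition~\ref{prop:rgconverge} and Assumption~\ref{ass:rgvar} this equals $M^2 \Sigma(\lambda^n) = M^2 \sum_{l} \bbE\|\Delta_l(\widehat m - \phi\mid\lambda^n)\|^2/p_l$, and one must argue this is finite and, ideally, uniformly bounded in $\lambda^n$ --- here I would invoke Assumption~\ref{ass:q}, which bounds $q(x\mid\lambda)$ above and below uniformly in $\lambda$ and $x$, hence bounds $G_j$ (so Assumption~\ref{ass:G} holds with a $\lambda$-independent constant), hence via Proposition~\ref{prop:unbiasedunno}-type estimates gives $\bbE\|\Delta_l\|^2 \lesssim N_l^{-1}$ with a constant uniform in $\lambda$, so that the choice $p_l \propto N_l^{-1} l\log l$ from Section~\ref{sssec:smc_debias} yields $\Sigma(\lambda) \le \Sigma_{\max} < \infty$ uniformly; (iii) note that $\nabla \log p$ itself is bounded and Lipschitz on $\Xi$ (again from Assumption~\ref{ass:q}, since $\Pi(\widehat m - \phi\mid\lambda)$ is a smooth bounded function of $\lambda$, plus the assumed smoothness of $\log \bbP(\lambda)$), which supplies the drift regularity and the dissipativity/ergodicity conditions required for the diffusion \eqref{eq:ctslang} with $P=1$; (iv) conclude that the hypotheses of the consistency and CLT/MSE theorems of \cite{teh2016consistency, vollmer2016exploration} are met with $\delta_n \propto n^{-1/3}$, which immediately gives \eqref{eq:sgldconv} together with the asymptotic optimality and the $K^{-2/3}$ mean-squared-error rate, reproducing the conclusion of Proposition~\ref{prop:sgldconv} verbatim.

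The main obstacle I anticipate is step (ii)--(iii): making precise, with constants uniform in $\lambda$, the bound on the estimator variance $\Sigma(\lambda)$ and the regularity of $\nabla \log p$, and reconciling the sub-canonical nature of $\Delta_l$ (noted after Proposition~\ref{prop:unbiasedpi}, where the expected cost is only \emph{finite with high probability} rather than deterministically bounded) with whatever moment conditions the cited SGLD theorems demand. In practice I would handle this either by appealing to the versions of the results in \cite{teh2016consistency, vollmer2016exploration} that require only a finite second moment of the gradient estimator uniformly along the chain --- which is exactly what Assumption~\ref{ass:rgvar} plus Assumption~\ref{ass:q} deliver --- or, if a stronger moment bound were needed, by truncating the random level $L$ at a slowly growing cutoff and absorbing the resulting bias into the analysis; the bulk of the remaining argument is then a routine citation-and-verification exercise rather than new mathematics.

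\begin{proof}
By \eqref{eq:postgrad} and linearity, $\widehat{\nabla \log p(\lambda)}$ defined in \eqref{eq:unbiased_gradlike} satisfies $\bbE\,\widehat{\nabla \log p(\lambda)} = M\,\bbE[\Delta_L/p_L] + \nabla \log \bbP(\lambda) = M\,\Pi(\widehat m - \phi\mid\lambda) + \nabla \log \bbP(\lambda) = \nabla \log p(\lambda)$, where the middle equality is Proposition~\ref{prop:unbiasedpi} applied componentwise (with $\varphi = \widehat m_j - \phi_j$). Write $\xi_n := \widehat{\nabla \log p(\lambda^n)} - \nabla \log p(\lambda^n)$; conditionally on $\lambda^n$ the $\xi_n$ are independent of each other and of the Brownian increments in \eqref{eq:rmsgld}, so \eqref{eq:rmsgld} is a stochastic gradient Langevin recursion of the type treated in \cite{teh2016consistency, vollmer2016exploration}.

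Under Assumption~\ref{ass:q} the ratios $G_j$ are bounded above and below uniformly in $x$ and $\lambda$, so Assumption~\ref{ass:G} holds with a constant independent of $\lambda$; hence $\bbE\|\Delta_l(\widehat m - \phi\mid\lambda)\|^2 \le c_0 N_l^{-1}$ with $c_0$ independent of $\lambda$, and with $p_l \propto N_l^{-1} l\log l$ as in Section~\ref{sssec:smc_debias} the conditional variance $\bbE[\|\xi_n\|^2\mid\lambda^n] = M^2 \sum_{l} \bbE\|\Delta_l(\widehat m - \phi\mid\lambda^n)\|^2/p_l \le M^2 c_0 \sum_l (l\log l)^{-1} =: \Sigma_{\max} < \infty$, uniformly in $\lambda^n$. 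Likewise $\Pi(\widehat m - \phi\mid\lambda)$ is bounded and Lipschitz on $\Xi$ by Assumption~\ref{ass:q}, and $\nabla\log\bbP$ is assumed computable and regular, so $\nabla \log p$ satisfies the boundedness, Lipschitz and dissipativity conditions required for \eqref{eq:ctslang} with $P(\lambda)=1$ to be geometrically ergodic with invariant law $\mu$.

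With a zero-mean gradient noise of uniformly bounded conditional second moment and a drift of the above regularity, the convergence and mean-squared-error analysis of \cite{teh2016consistency, vollmer2016exploration} applies verbatim with step sizes $\delta_n \propto n^{-1/3}$: one obtains the law of large numbers \eqref{eq:sgldconv} for suitably regular $\varphi$, the asymptotic optimality of this choice of $\delta_n$ in terms of cost, and MSE $\propto K^{-2/3}$ after $K$ steps. This is exactly the conclusion of Proposition~\ref{prop:sgldconv} with $p(\lambda) = \bbP(\lambda\mid Y)$, completing the proof.
\end{proof}
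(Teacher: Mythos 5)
Your proposal follows essentially the same route as the paper: the paper's proof of Theorem \ref{thm:sgld_deb} is precisely the two-step reduction you make, namely unbiasedness of \eqref{eq:unbiased_gradlike} via Proposition \ref{prop:unbiasedpi} (applied componentwise to $\widehat m_j-\phi_j$, with the prior gradient added deterministically), followed by an appeal to Proposition \ref{prop:sgldconv}; the paper states this in one line and leaves all hypothesis-checking implicit, whereas you attempt to verify the noise-variance and drift-regularity conditions explicitly, which is a reasonable strengthening rather than a different argument.

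One concrete slip in your added verification: with $\bbE\|\Delta_l\|^2 \lesssim N_l^{-1}$ and the choice $p_l \propto N_l^{-1}\, l\log l$ from Section \ref{sssec:smc_debias}, the bound you write, $\sum_l \bbE\|\Delta_l\|^2/p_l \lesssim \sum_l (l\log l)^{-1}$, is \emph{not} finite --- that series diverges --- so the claimed uniform bound $\Sigma_{\max}<\infty$ does not follow as stated. The fix is to take a summable tilt, e.g.\ $p_l \propto N_l^{-1}\, l(\log l)^2$ (or $N_l^{-1} l^{1+\epsilon}$), under which Assumption \ref{ass:rgvar} and your uniform-in-$\lambda$ variance bound do hold while the expected-cost discussion is unchanged; note the paper itself is loose on exactly this point, since its own "finite variance" claim for the $l\log l$ choice has the same defect, and its proof of the theorem never engages with the variance at all. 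With that correction, the rest of your argument (uniformity of the constants via Assumption \ref{ass:q}, regularity of $\nabla\log p$, and the citation of \cite{teh2016consistency, vollmer2016exploration}) is a faithful, more careful version of the paper's proof.
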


\begin{proof}
This follows directly from Propositions \ref{prop:unbiasedpi} and \ref{prop:sgldconv}.
\end{proof}

\subsubsection{Reimann manifold method}
\label{ssec:rmm}

It is assumed in \cite{teh2013stochastic} that Proposition \ref{prop:sgldconv} holds also for deterministic evaluable $P(\lambda)$, 
yielding a Reimann manifold SGLD (RMSGLD), but this is not proven.
In any case, based on the above heuristic one would expect a different scaling of $\delta_n$ to be optimal, since 
the strong rate of convergence is reduced to $\delta$, so balancing terms gives $\delta \propto K^{-1/2}$ and an MSE of $K^{-1/2}$.
%This method is referred to here as RMSGLD.
Nevertheless, here we further postulate that \eqref{eq:sgldconv} holds indeed for the case in which we have an unbiased estimator 
$\widehat g (\lambda) \approx g(\lambda)$ and a non-negative unbiased estimator $\widehat P (\lambda) \approx P(\lambda)$.

%\begin{prop}
%Assume \ref{asn:gee} holds, .
%Then \eqref{eq:rmsgld} converges weakly, in the sense that
%$$
%\frac1K \sum_{n=1}^K \varphi(\lambda^n) \rightarrow \bbP(\varphi |Y) \, . 
%$$ 
%\end{prop}

%Returning to RMSGLD, for our example, 
Let $P(\lambda) = Q(1|\lambda) %Z(\lambda) = \sum_{x\in \{0,1\}^d} q(x|\lambda) 
= \Gamma_J(1|\lambda)$, 
and let the target be $p(\lambda) = \bbP(\lambda|Y) \propto \bbP(Y|\lambda) \bbP(\lambda)$, so that 
$$
\sum_{k=1}^J \frac{ \partial P_{jk} }{\partial \lambda_k}(\lambda) = %\sum_{k=1}^J 
Q({\bf 1}^T\phi | \lambda)  \, ,
$$
where ${\bf 1} \in \bbR^J$ denotes the vector of all $1$s, and
$$
\nabla \log p (\lambda) = \frac{M Q(\widehat{m} - \phi | \lambda)}{Q(1 | \lambda)}
%\nabla_\lambda \log \bbP(Y|\lambda) + \nabla \log \bbP(\lambda) =  
%\frac1{M} \sum_{k=1}^M \phi(y^{(k)})  - \frac{\sum_{x\in \{0,1\}^d}  \phi(x) q(x| \lambda)}{ \sum_{x\in \{0,1\}^d} q(x|\lambda) } 
+ \nabla \log \bbP(\lambda) \, ,
$$ 
where  
$\widehat{m}=\frac1{M} \sum_{k=1}^M \phi(y^{(k)})$. %$\nabla_\lambda \log \bbP(Y|\lambda)$ was defined already in \eqref{eq:maxent_fp}.
Then using the notations of Section \ref{sec:smcsamp}, \eqref{eq:gee} takes the form 
$$
g(\lambda) = \Gamma_J(M(\widehat m - \phi) +{\bf 1}^T\phi | \lambda) + %\sum_{k=1}^J \Gamma_J(\phi_k|\lambda) + 
\Gamma_J(1|\lambda) \nabla  \log \bbP(\lambda) \, .
$$
This can be unbiasedly approximated by 
\begin{equation}\label{eq:gee_rmsgld}
\widehat{g}(\lambda) = \Gamma^N_J(M(\widehat m - \phi) +{\bf 1}^T\phi | \lambda) %+ \sum_{k=1}^J \Gamma_J^N(\phi_k|\lambda) 
+ \Gamma_J^N(1|\lambda) \nabla \log  \bbP(\lambda) \, ,
\end{equation}
as can $P(\lambda)$ by simply $\widehat P (\lambda) = \Gamma_J^N(1|\lambda)$. 

This approach would allow direct simulation without de-biasing, since only unbiased estimates of the unnormalized measure $Q$ are required,
which can be constructed directly from SMC samplers, as described in Proposition \ref{prop:unbiasedunno}.
Nonetheless, this is not pursued further here due to the expected higher cost of an optimal implementation.
We note that the cost of these algorithms can be improved using the multilevel Monte Carlo method \cite{giles2016multilevel},
but this is deferred to future work.

\subsection{Separating an explicit error distribution}
\label{sec:error}

Before concluding the section, we mention an extension of the model considered in the rest of the paper which provides deeper UQ.
Suppose that we do not have i.i.d. observations $y^{(i)} \sim \pi(\cdot | \lambda)$, but in fact
we have observations of the form
\begin{equation}\label{eq:error}
d = {\rm mod} ( y + b^\epsilon, 2) \, , \quad  ~ y \sim \pi(\cdot | \lambda) \, ,
\end{equation}
where
$b^\epsilon = [b_1^\epsilon,\dots,b_d^\epsilon]^T, ~ b_i^\epsilon \sim B(\epsilon) ~ {\rm i.i.d.}~$ for  $i=1,\dots, d$, 
%where $b^\epsilon = [b_1^\epsilon,\dots,b_d^\epsilon]^T$, 
and 
$B(\epsilon)$ denotes a Bernoulli with probability $\epsilon$. 
In other words, each $b_i^\epsilon$ is independently $1$ with probability $\epsilon$ and $0$ otherwise.
This model explicitly accounts for error in the observations,
 i.e. with probability $\epsilon$ any qubit may have been observed wrong.
 
We have $\bbP(d|y,\lambda) = \epsilon^{|d-y|}(1-\epsilon)^{1-|d-y|}$.
The likelihood of a single observation $d$ is given by 
$$
\bbP(d | \lambda) = \int_{E^d} \bbP(d|y,\lambda) \pi(y|\lambda) {\rm d}y \, ,
$$
and so we can see that again we have an integral with respect to $\pi(\cdot|\lambda)$.
Let $D=[d^{(1)},\dots,d^{(M)}]$. Due to the independence between observations and errors, we have
$$
\bbP(D|\lambda) = \prod_{i=1}^M \frac{Q(\varphi_{d^{(i)}} | \lambda)}{Q(1 | \lambda)} \, ,
$$
where we defined $\varphi_{d^{(i)}}(y) := %\bbP(d^{(i)} | y,\lambda) = 
\epsilon^{|d^{(i)}-y|}(1-\epsilon)^{1-|d^{(i)}-y|}$.
%and $D=\{d^{(i)}\}_{i=1}^M$.

Therefore
$$
\frac1M \nabla_\lambda \log \bbP(D|\lambda) = 
\frac1M \sum_{i=1}^M \frac{Q(\varphi_{d^{(i)}}\phi| \lambda)}{Q(\varphi_{d^{(i)}}|\lambda)}
- \frac{Q(\phi|\lambda)}{Q(1|\lambda)} \, .
$$
One can define
$$
F_D(\lambda) = \frac1M \sum_{i=1}^M Q(1|\lambda) Q(\varphi_{d^{(i)}}\phi| \lambda) \prod_{j=1, j\neq i}^M Q(\varphi_{d^{(j)}}\phi| \lambda) - 
Q(\phi|\lambda) \prod_{j=1}^M Q(\varphi_{d^{(j)}}\phi| \lambda) \, .
$$
An unbiased estimator of this function can be constructed from M independent SMC samplers, one for each factor.
Despite being massively parallel, this is a significant computational burden, particularly for large $M$. 
Investigation of efficient algorithms for solving this problem is the topic of ongoing research.

\section{Simulations}\label{sec:simos}
\label{sec:simos}

\subsection{Maximum likelihood: exact marginals}
%\subsection{Putting it together for Robbins-Monro}
\label{ssec:unbias_rm}

Recall the form of $\widehat F (\lambda)$, for a fixed $\lambda$ from \eqref{thm:sasmc}. %\eqref{eq:maxent_fpalt}.
Using the machinery from subsection \ref{sec:noco} we could construct such an unbiased estimator,
for any sequence of intermediate targets $\gamma_j(x|\lambda)$, and any $J$, provided that 
$\gamma_J(x|\lambda) = q(x|\lambda)$.  
The natural choice is of course to let
\begin{equation}\label{eq:smc_intermediate}
\gamma_j(x|\lambda) = \exp\left [ \sum_{i=1}^j \lambda_i \phi_i(x) \right ] \, ,
\end{equation}
as above, %suggested in the Theorem, 
and this choice will be used here.
%$J=2N(N-1)$, 
%and let $j$ index $i=1,\dots,N$, $k=i+1,\dots,N$, and $(\xi,\eta)\in\{0,1\}^2$.  
%In fact, for the choice \eqref{eq:smc_intermediate} it is of course possible to define $x_{2:j}$
%at a reduction in cost. 
%However, this is not the dominant cost of the algorithm here, and introduces additional notational burden, 
%so we do not do so.
The MCMC kernels $M_j$ are chosen as Metropolis-Hastings kernels with the reversible random walk proposal
$q(\lambda,\lambda')$ defined by $\lambda'_i = {\rm mod}(\lambda_i + 1, 2)$ with probability $\beta$ 
and $\lambda'_i = \lambda_i$ otherwise, for $i=1,\dots, J$. In other words,
%\begin{equation}
%\begin{array}\begin{bmatrix} 
%defined by 
$q(\lambda,\lambda') = |\lambda - \lambda'|$ 
is a vector of independent Bernoullis with probability $\beta$, 
and is hence symmetric, with tunable parameter $\beta$ determining the step-size.
%\end{array}
%\end{equation}
We note that a Gibbs sampler can also be used here.
Now we let $\delta_n = 1/n$, and iterate \eqref{eq:robbins_monro} until convergence.

In order to test the accuracy of our methodology, we simulate the data as follows.
Choose %$w\in \bbR^{d(d+1)/2)}$
$a_{ij}\in \bbR$, for $i=1,\dots, d$ and $j=i,\dots,d$, and let $A_{ij} = a_{ij}$ for $j \geq i$ and $A_{ij}=a_{ji}$ otherwise.
Now consider the true target 
\begin{equation}
\pi(\cdot | a) = \frac1{Z(a)} \exp( - x^T A x) \, .
\end{equation}
Similarly, let $\Lambda$ be defined as $\Lambda_{ij} = \lambda_{ij}$ for $j \geq i$ and $\Lambda_{ij}=\lambda_{ji}$ otherwise,
for coefficients $\{\lambda_{ij}\}_{\{i=1,\dots, d, ~j=i,\dots,d\}}$.
We will identify $a, \lambda$ with vectors in $\bbR^{J}$, 
where $J=d(d+1)/2$.
Notice this is of MaxEnt form with $\phi(x) = {\rm vec} [ \{ x_i x_j \}_{\{i=1,\dots, d, ~ j=i,\dots, d\}}]$, 
i.e. first and second moments %will be 
are observed.
We choose $d=10$, let $N=2d$, take $d$ MCMC steps with $\beta=0.6$, and 
define $\delta_n = \delta^1_n 1_{n\leq 2d} + \delta^2_n 1_{n>2d}$, 
where $\delta^1_n = \epsilon n_0/(n_0 + n)$, 
$\delta^2_n = \Gamma^N_J(m-\phi|\lambda^{2d})^{-1} \epsilon n_0/(n_0 + n)$, 
$n_0=5d$, and $\epsilon=1$.
For $n\leq 2d$ the direction of descent is taken as $m - \eta_J^N(\phi)$, 
while for $n>2d$ the full estimator \eqref{eq:unbiased} is used 
$\widehat{F}(\lambda^n) = \Gamma_J^N(m - \phi | \lambda^n )$,  
%= \widehat{Z}^N_J(\lambda^n) (\eta_J^N(\lambda^n))(\varphi)$, as defined in \eqref{eq:unbiased}, 
%where $\varphi(x) = m - \phi(x)$, 
so that this is an unbiased estimator %approximation 
of $Q(m-\phi|\lambda)$. %\eqref{eq:maxent_fpalt}.  
This construction prevents large fluctuations in 
$\Gamma^N_J(1|\lambda^{n})$ %= \widehat{Z}^N_J(\lambda^n)$ from 
during early outer iterations from precluding %venting 
stabilization of the algorithm.
The truth, reconstruction, and convergence plot are given in Fig. \ref{fig:post}. 
%The convergence is illustrated in Fig. \ref{}.

\begin{figure}
\includegraphics[width=0.32\columnwidth]{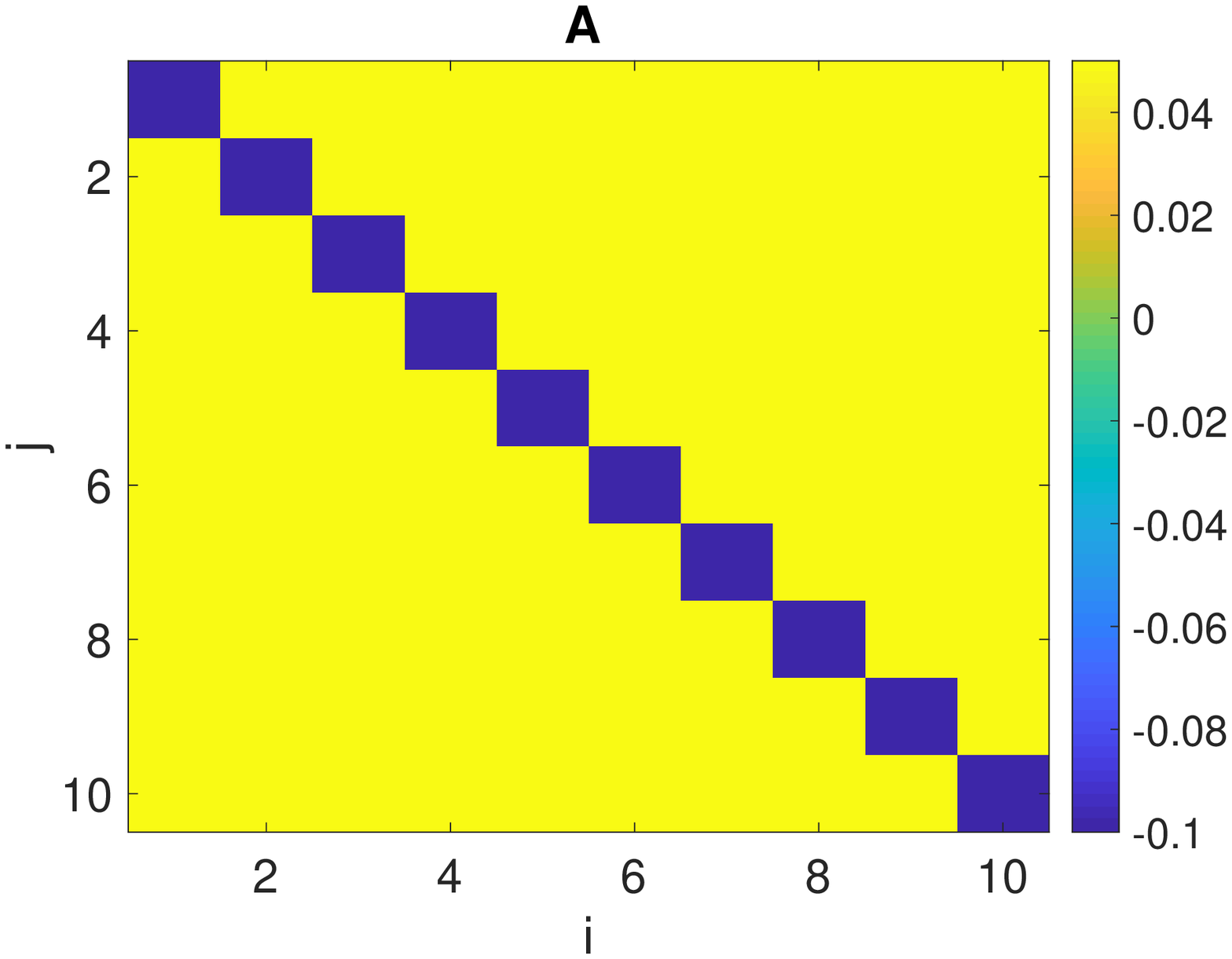}
\includegraphics[width=0.32\columnwidth]{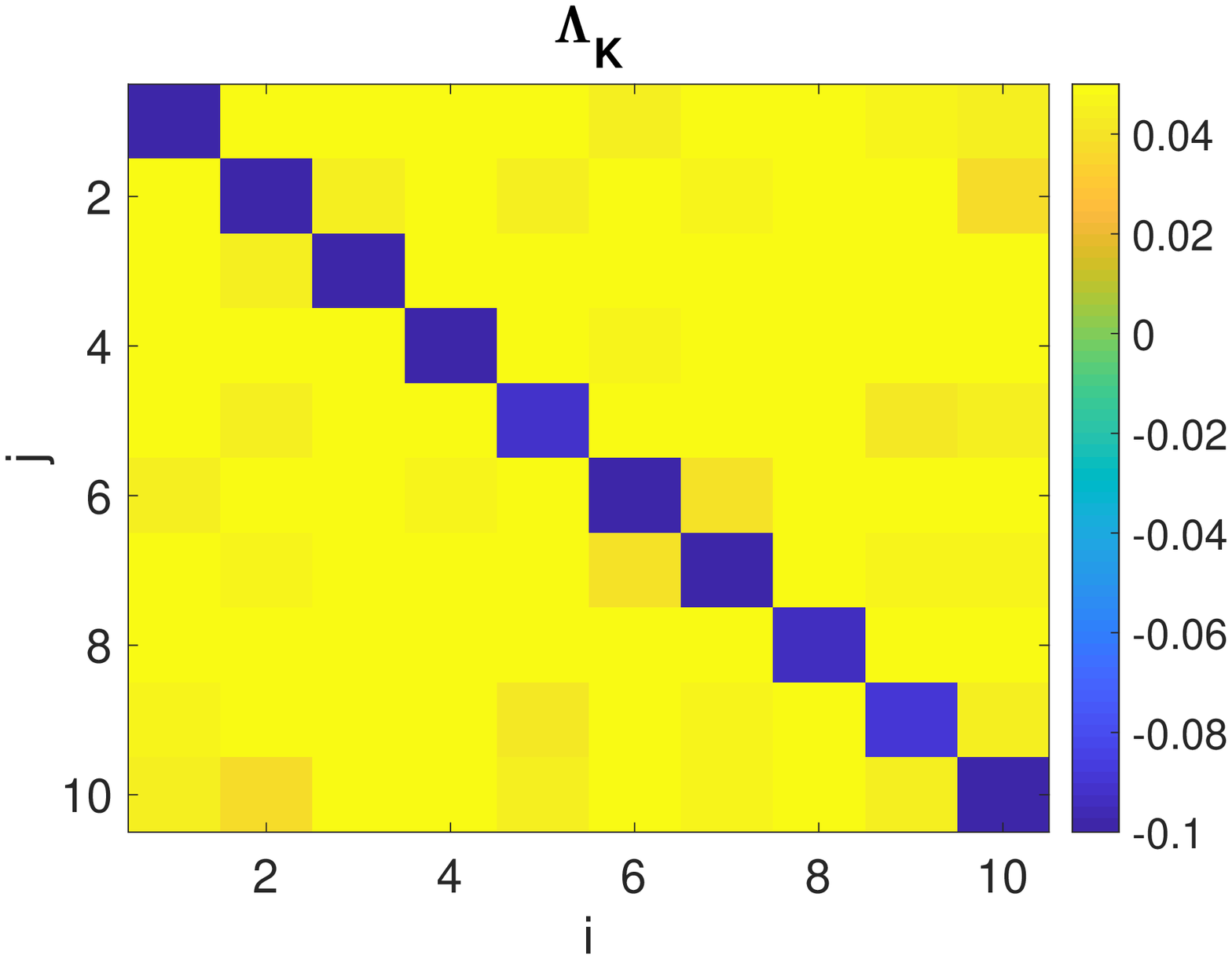}
\includegraphics[width=0.32\columnwidth]{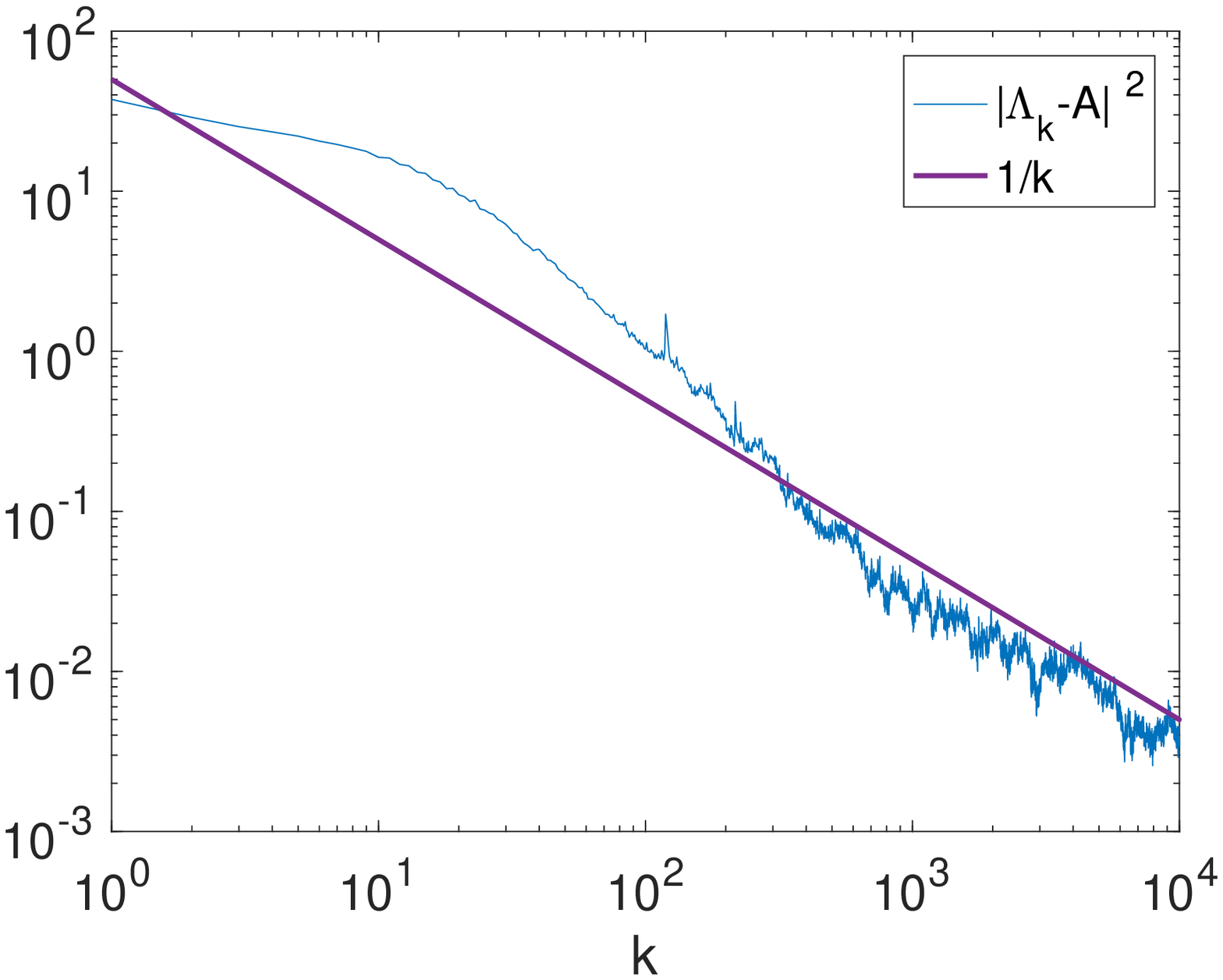}
\caption{Truth (left) and reconstruction after $K=10^4$ steps (middle) for known truth.
The error as a function of iteration is given in the right plot.}
\label{fig:post}\end{figure}

\subsection{Maximum likelihood: observations}
\label{ssec:max_ent_sim_obs}

In this section we explore maximum likelihood estimation with finitely many observations
and empirical moments. We look at the cases of $M=1000$ and $M=50$ observations.
The results are shown in Figure \ref{fig:obsmom}. The top 
two panels correspond to $M=1000$ (left) and $M=50$ (right).
The good news is that we obtain the MLE, up to the bias arising due to the observational error,
quite rapidly: within several hundred iterations for $M=1000$ and several tens for $M=50$.
This is observed in the middle two panels. 
Notice from the plot for $M=50$(right) that the misfit error eventually starts to increase once the algorithm begins to fit noise. 
This is typical and can be expected.
When the algorithm is stopped the error is quite significant.
In the top two panels, we observe that the reconstruction for $M=50$ 
does not look similar to the true $A$ (shown in the left panel of Figure \ref{fig:post}),
however for $M=1000$ it looks quite acceptable.  
Just for a sanity check, observe the bottom two panels which show the 
observed moments for $M=50$ on the left and predicted moments using the reconstruction (top right) on the right.
The agreement is quite decent, and reasonable given the noise level of the observations.
The point is that the error in the observations translates to a much larger error in the parameters $\lambda$.
%In this example the error seems to be amplified by a factor of around $500$.  

\begin{figure}
\includegraphics[width=0.4\columnwidth]{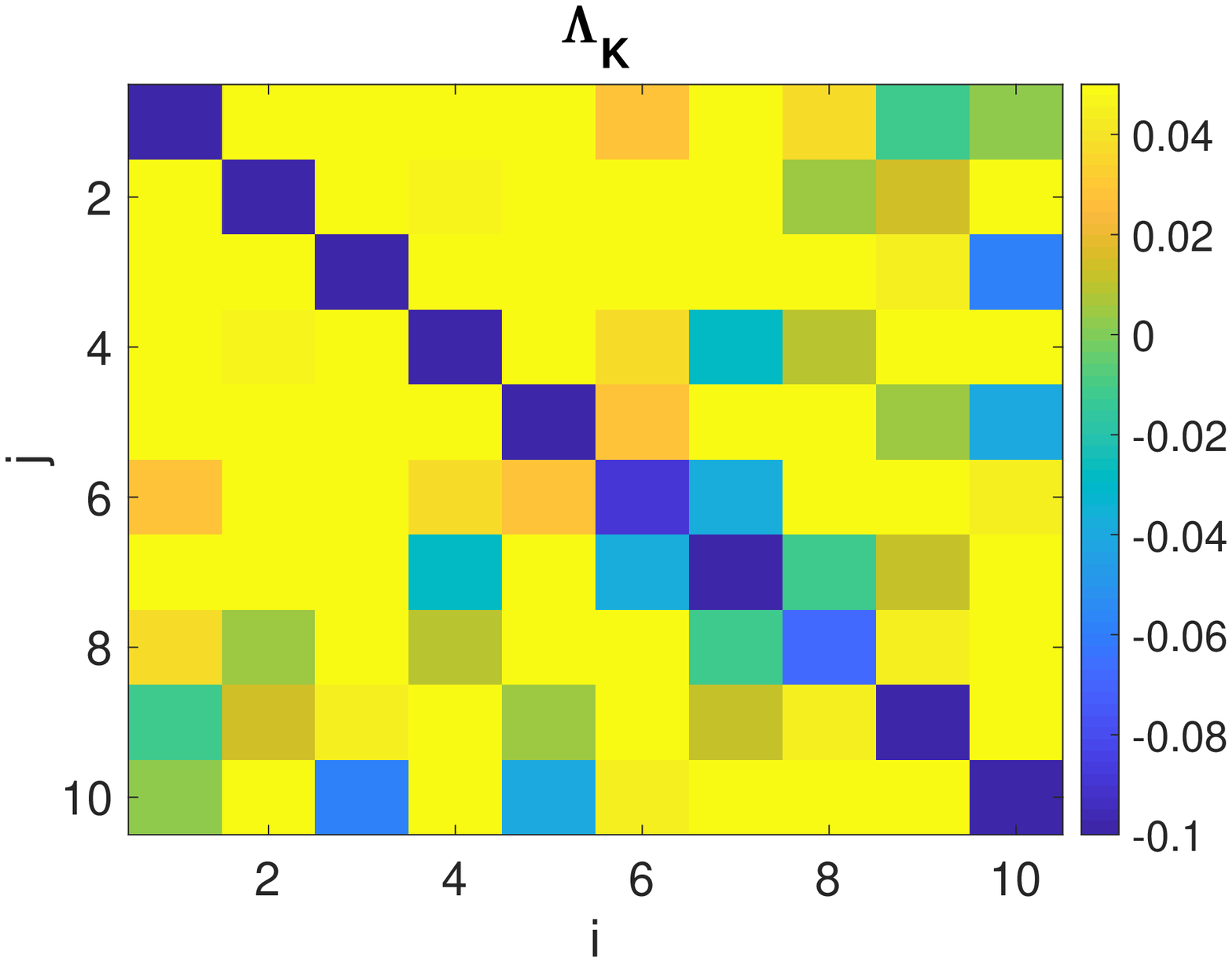}
\includegraphics[width=0.4\columnwidth]{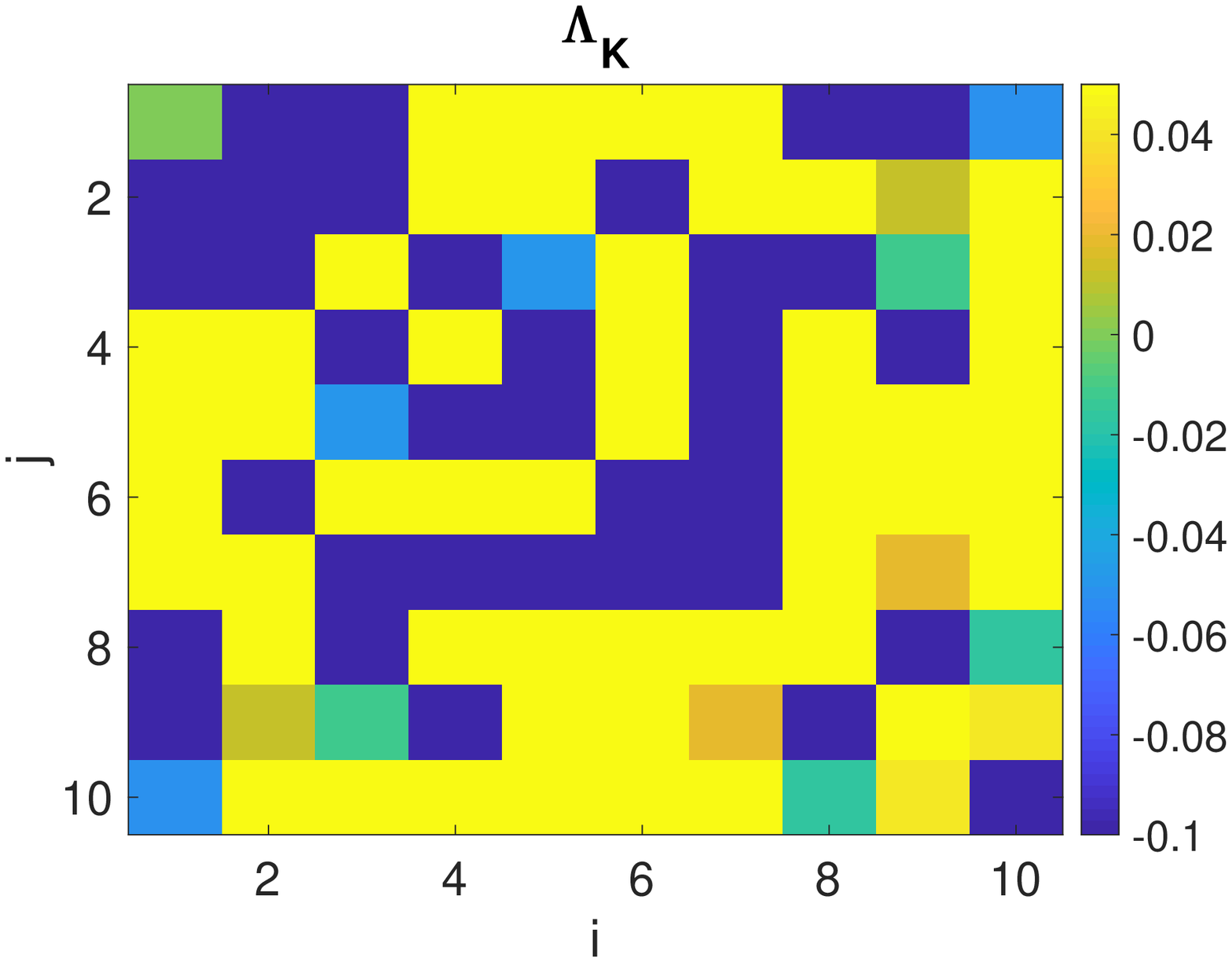} \\
\includegraphics[width=0.4\columnwidth]{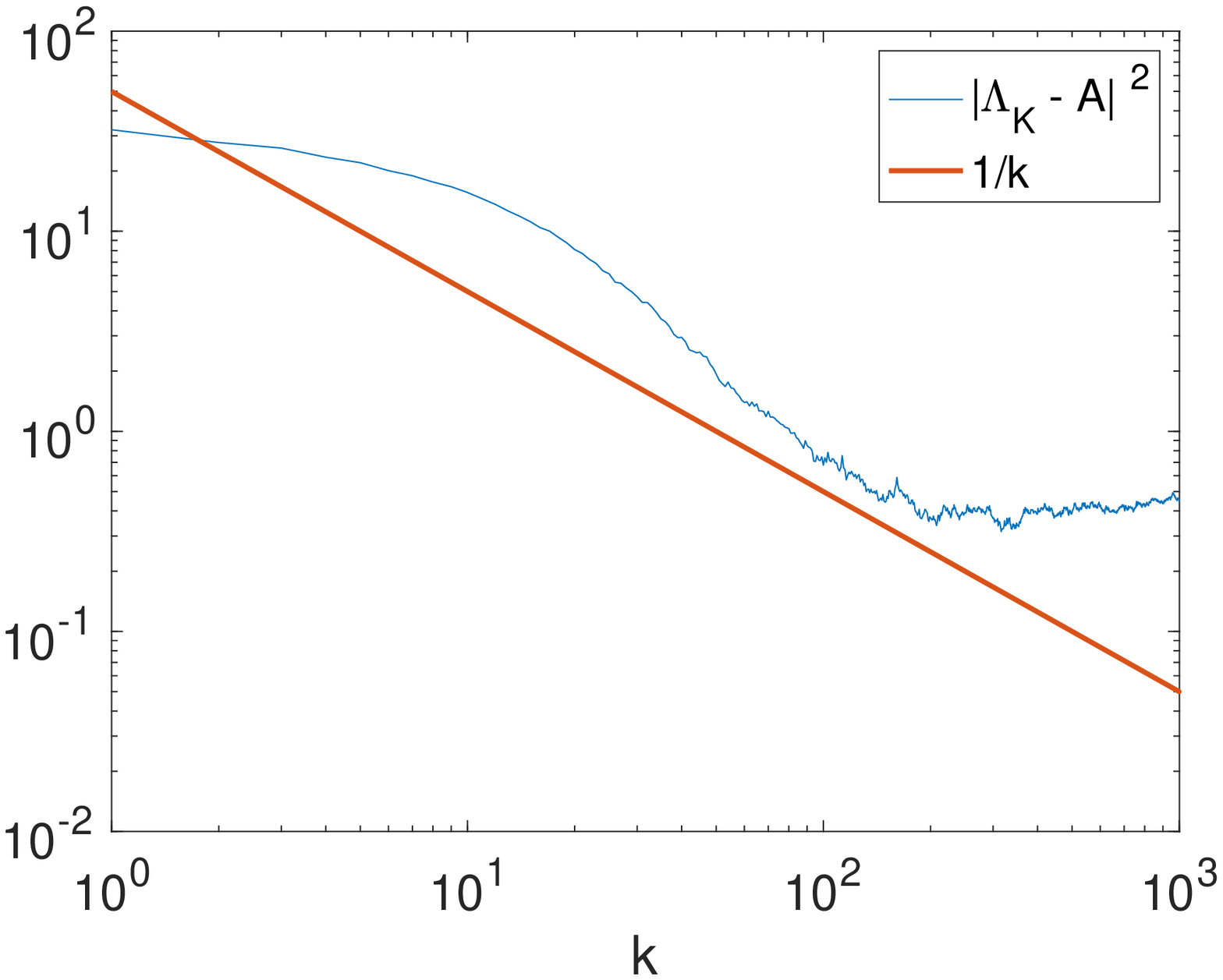}
\includegraphics[width=0.4\columnwidth]{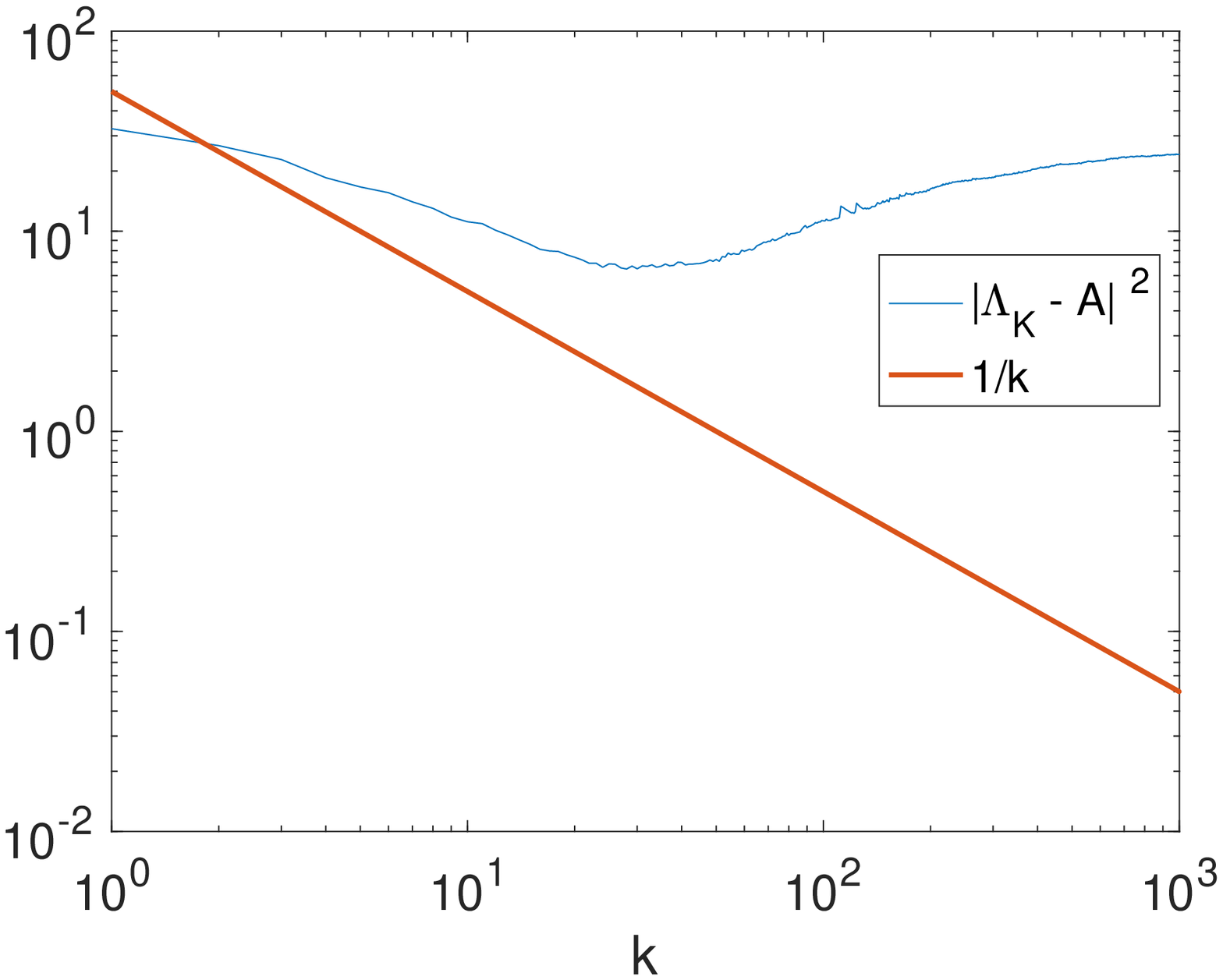}\\
\includegraphics[width=0.4\columnwidth]{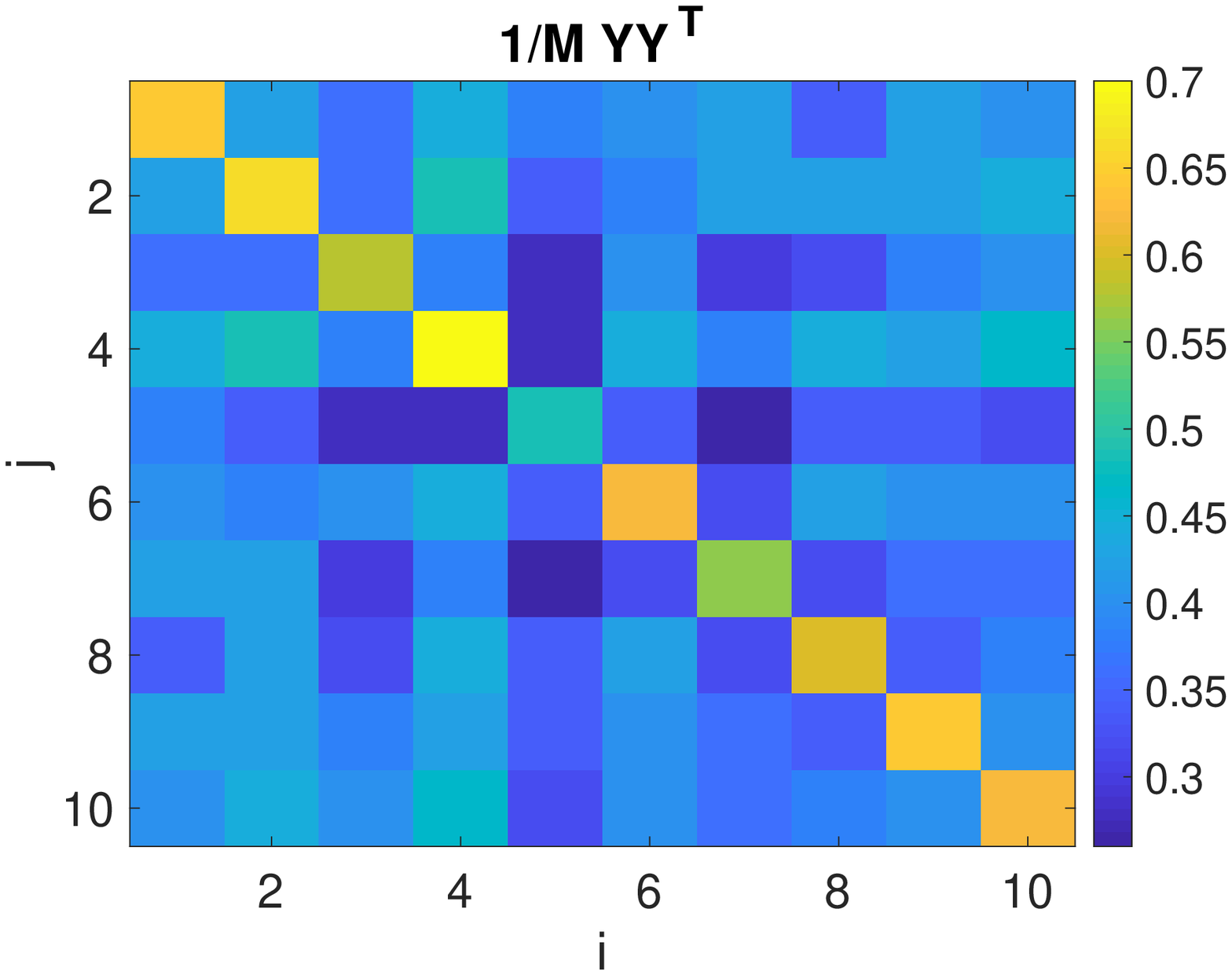}
\includegraphics[width=0.4\columnwidth]{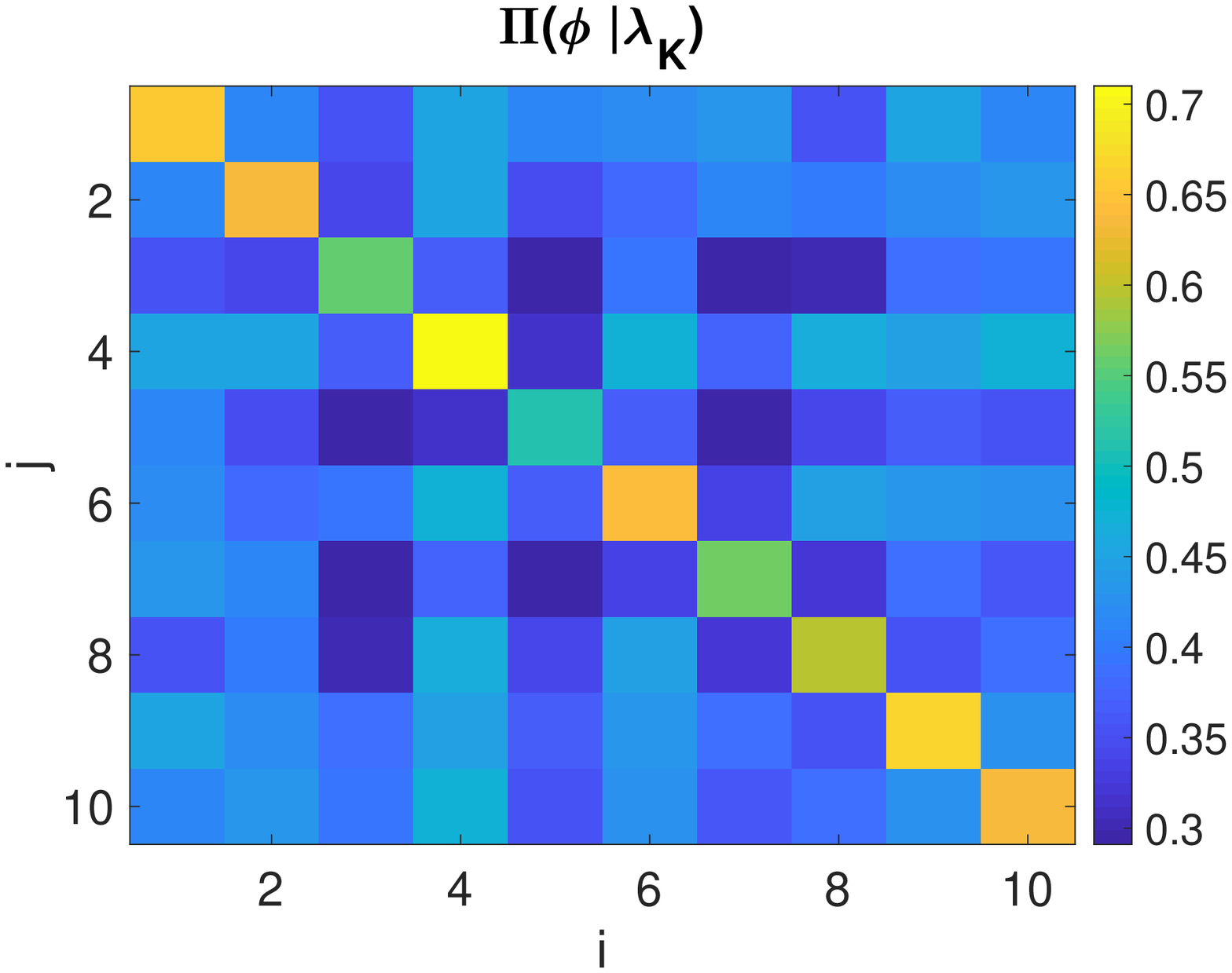}\\
\caption{The reconstruction with $1000$ (left) and $50$ (right) observations are given in the top row,
along with the corresponding convergence plots in the second row. 
The bottom row shows the actual second moments $\widehat m$ (left), with $M=50$ observations, 
which are used to train the model, and the moments under the reconstruction (right). }
\label{fig:obsmom}\end{figure}

\subsection{Uncertainty Quantification using de-biasing}
\label{ssec:uq_debias}

Here we consider sampling from the posterior, with density $\bbP(\lambda | Y)$.
The non-informative improper prior $\bbP(\lambda) \propto 1$ is adopted.

As an initial experiment we ran the SGD algorithm using 
\begin{equation}\label{eq:drift}
\widehat{F}(\lambda) = 
\left (\tilde \eta^{N_L}_J(\widehat{m}-\phi |\lambda) - \tilde \eta^{N_L/4}_J(\widehat{m}-\phi | \lambda)\right)/p_L \, ,
\end{equation}
with $L\sim {\bf p}$ as described in subsection \ref{ssec:debias}, 
the numerator constructed as in \eqref{eq:ais}, and ${\bf p}$ chosen as described in subsection \ref{sssec:smc_debias}, 
%the estimator 
with the same 
prototype problem except with $d=4$, and with exact moments.
The results are shown in Figure \ref{fig:sgd} left. 
The middle and right panels
show the results of running the algorithm with the original estimator, and the simple biased but consistent 
estimator $\eta^N_J(m-\phi | \lambda)$, respectively. Note the algorithm with the biased but consistent drift appears to be convergent also. 
As mentioned earlier, this is %consistent with what is 
done in practice, but we are not aware of theoretical results verifying this.
Therefore we will proceed with the SGLD using this unbiased estimator, i.e. 
we iterate \eqref{eq:rmsgld} with $P=1$ and $\delta_n=n^{-1/3}$, and drift given by \eqref{eq:drift}, 
with $L \sim \bf{p}$.

\begin{figure}
\includegraphics[width=0.32\columnwidth]{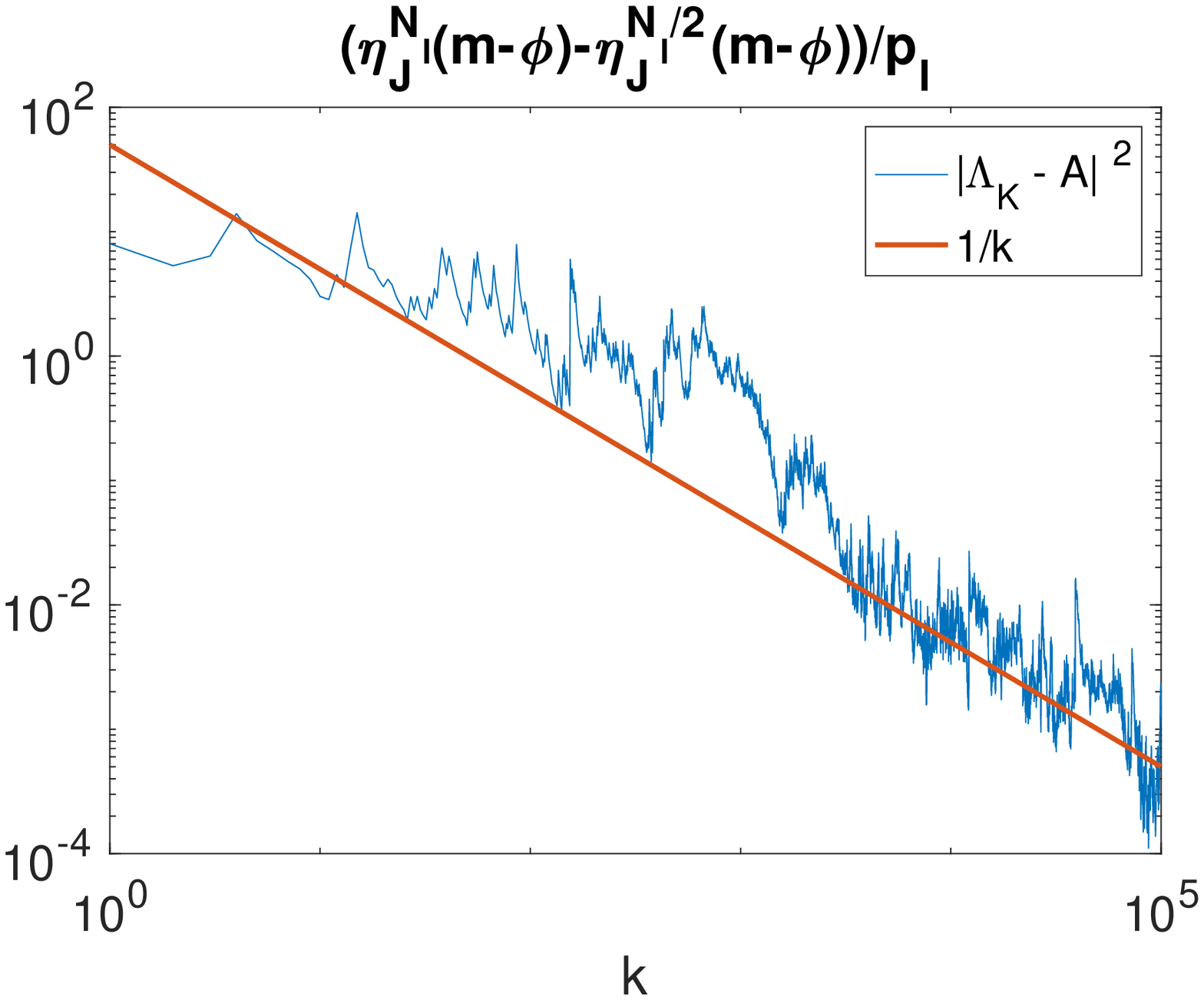}
\includegraphics[width=0.32\columnwidth]{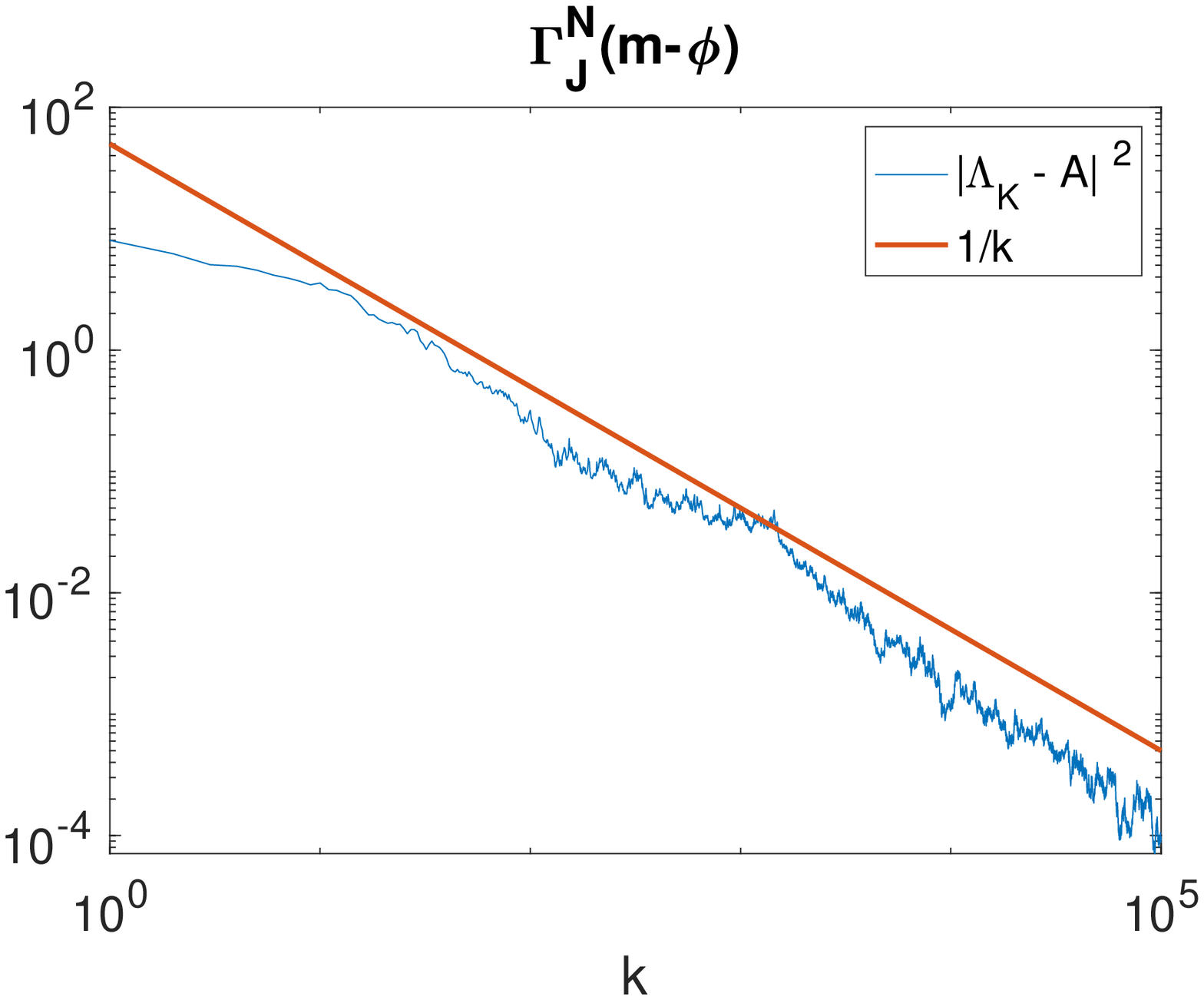} 
\includegraphics[width=0.32\columnwidth]{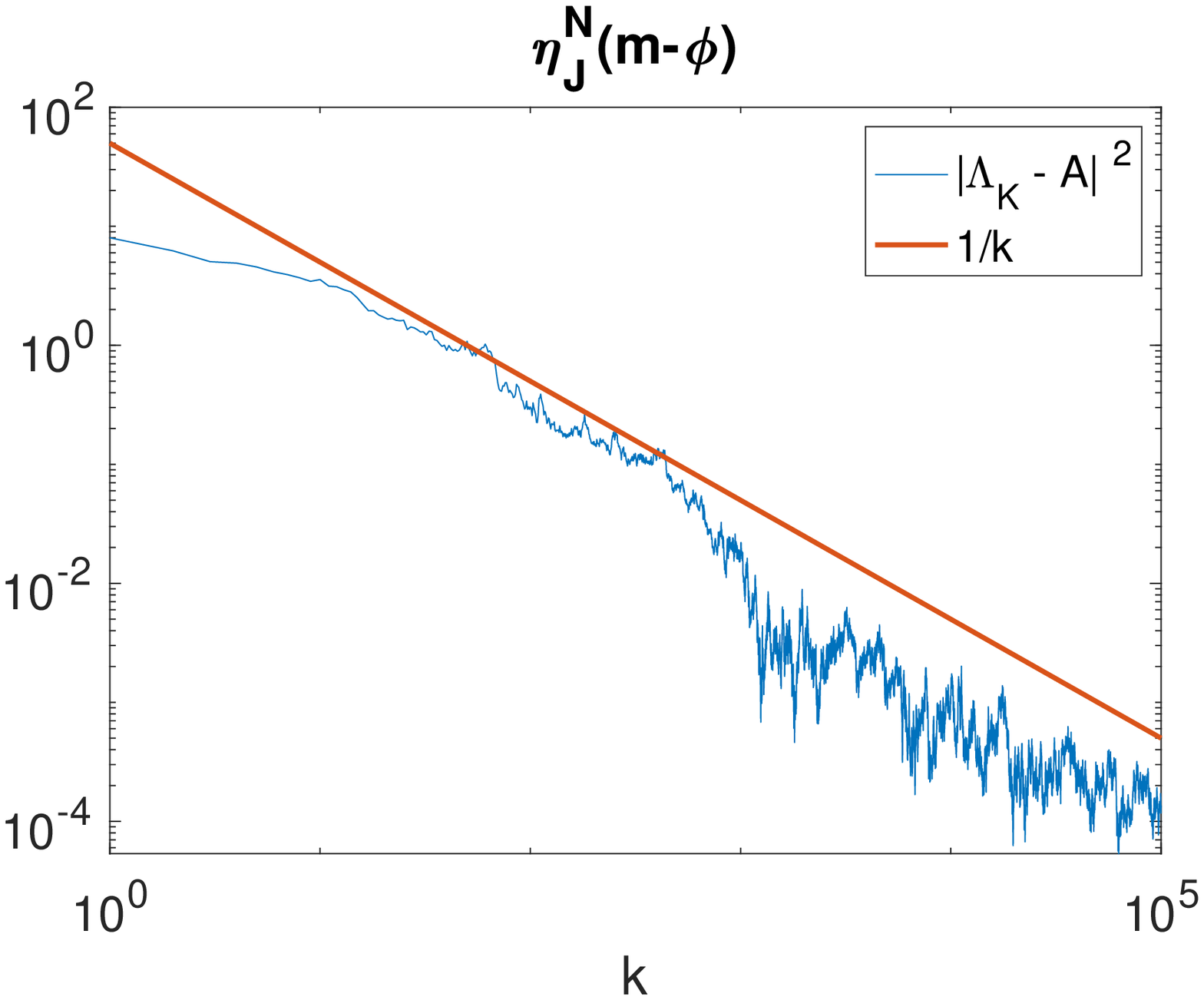}
\caption{Convergence of SGD with the debiased estimator described in Section \ref{ssec:uq_debias} (left), 
the original unbiased estimator $\widehat F$ (middle), 
and the simple consistent but biased estimator (right).}
\label{fig:sgd}\end{figure}

The full SGLD MCMC is run with $M=1000$ observations on the $d=4$ qubit system.
It takes roughly two hours on a laptop to obtain $K=10^6$ samples. 
The pairwise marginals on the diagonal ${\rm diag}(\Lambda)$ are illustrated in Figure \ref{fig:uq}.
The histograms are constructed by discarding the first half of the samples for burn-in, and resampling 
the remaining samples according to their weights $\{\delta_n\}$.
The true value is also indicated in red in the plots.
While the truth is indeed in the region of high probability, and the mean is reasonably close to the truth, 
the spread of the posterior is still quite significant, and more than one would hope for with $M=1000$ observations.
This is consistent with the results of Figure \ref{fig:obsmom}, 
where it is observed that the error in the coefficients is amplified quite a bit in comparison to the error in the 
moments. 
It is however not feasible to compute the pushforward distribution from the coefficients to the moments 
here, since such estimation for a single moment 
would require a separate large sample simulation (e.g. by MCMC or SMC) for each of the $K=10^6$ samples. 
We look at the same simulation with $M=10^6$ observations, and the results are plotted in Figure \ref{fig:uq2}. 
The spread is much tighter, and also there is a strong correlation between the diagonal elements in this case.
%Comparing the error amplification factor $\sum\sum_{}$ 

\begin{figure}
\includegraphics[width=1\columnwidth]{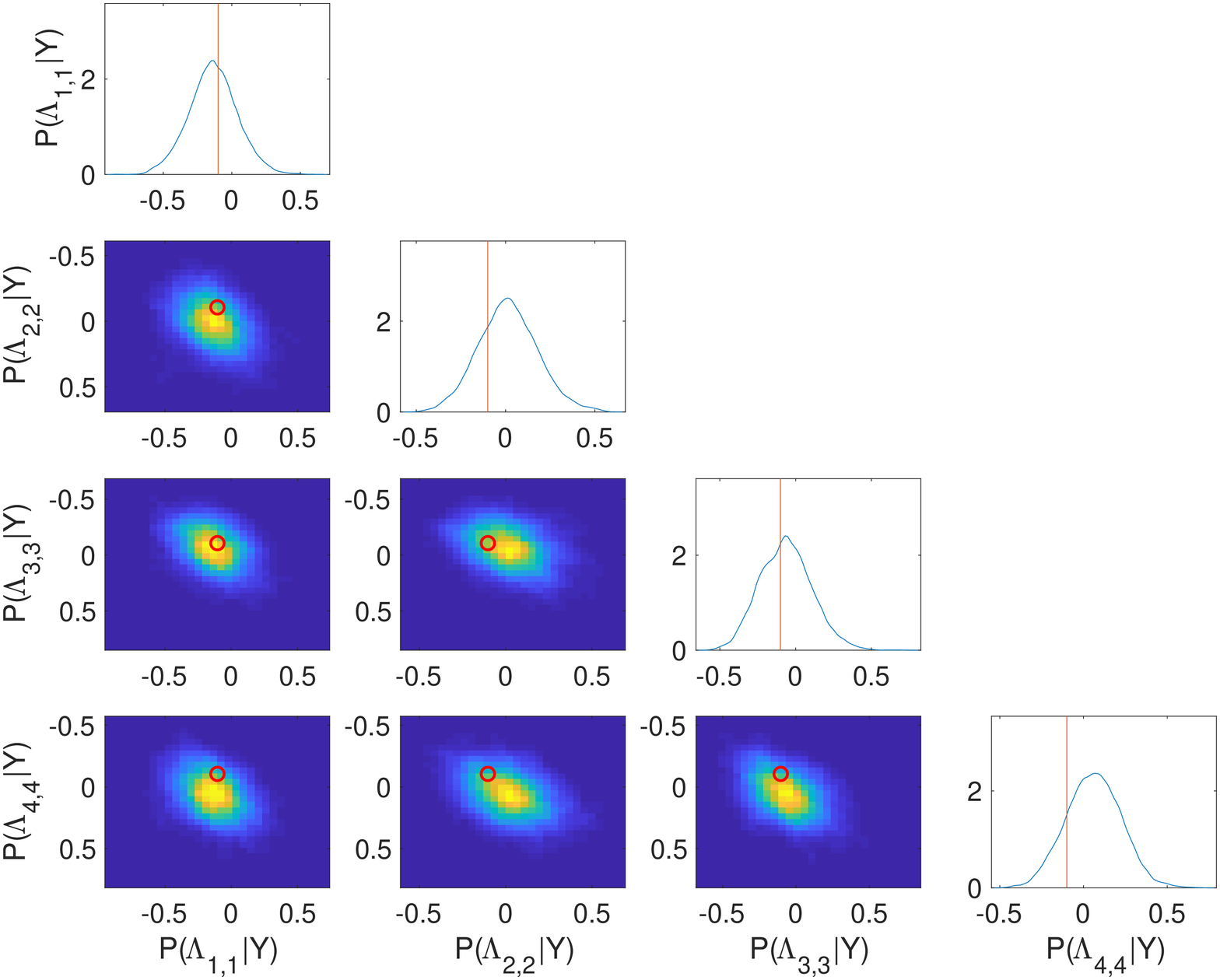}
\caption{Illustration of pairwise marginal UQ for the posterior on the diagonal of $\Lambda$ with $M=1000$ observations and $d=4$ qubits.
The true value of the parameters is indicated in red.}
\label{fig:uq}
\end{figure}

\begin{figure}
\includegraphics[width=1\columnwidth]{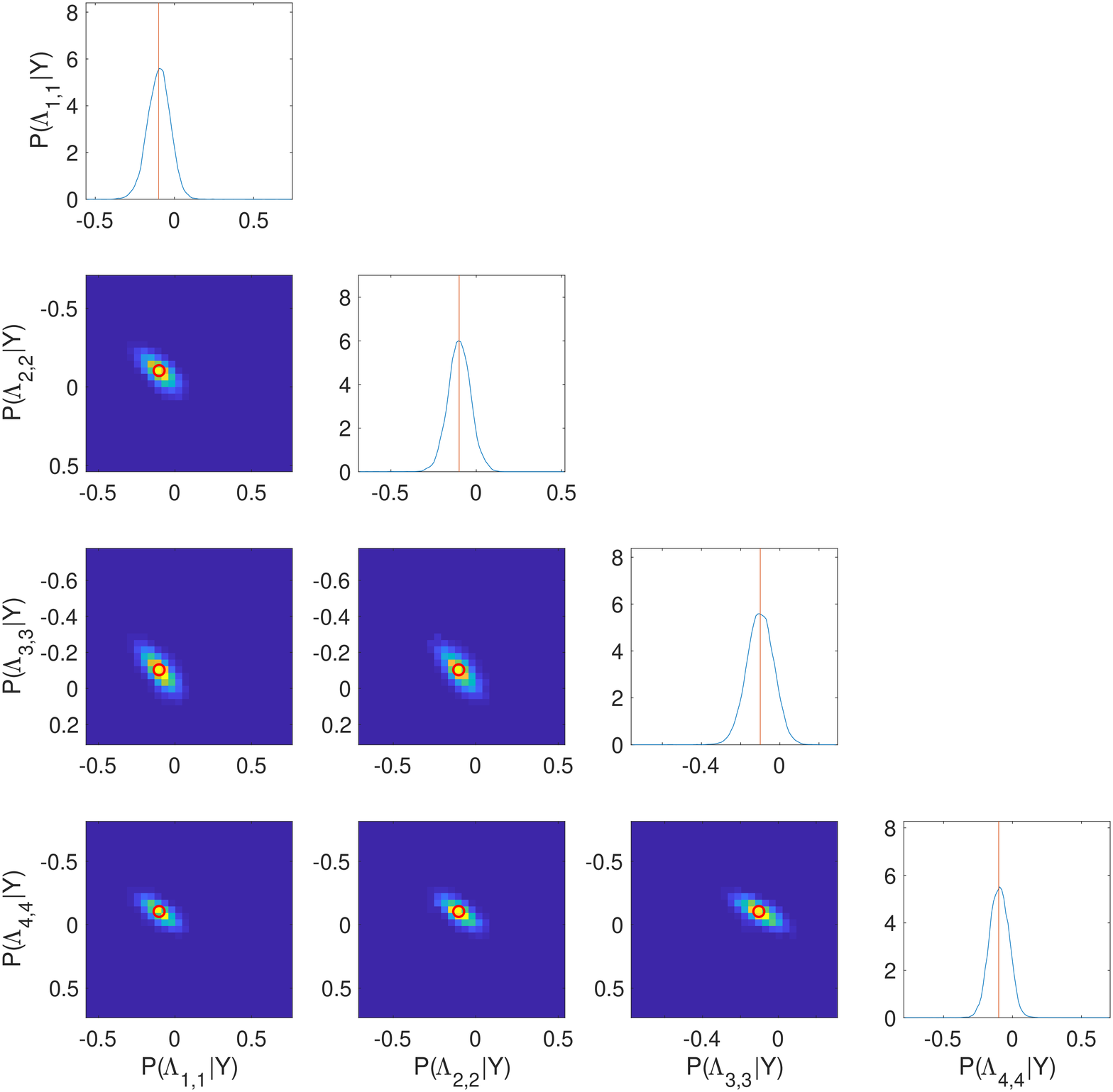}
\caption{Illustration of pairwise marginal UQ for the posterior on the diagonal of $\Lambda$ with $M=10^6$ observations and $d=4$ qubits.
The true value of the parameters is indicated in red.}
\label{fig:uq2}
\end{figure}

%\subsection{Uncertainty Quantification using unbiased estimator for non-constant diffusion}
%\label{ssec:uq_unbiased}
%

%\section{Experiments}

\section{Summary and Future Directions}

A first attempt has been made to estimate the MaxEnt distribution associated to 
outputs of a quantum computer, and quantify the posterior uncertainty.  This 
task has lead to a novel use of the SMC sampler for construction of an unbiased 
estimator of the intractable drift for use in a Robbins Monro stochastic approximation algorithm,
which can be used to efficiently compute the MaxEnt distribution for exact moments or the MLE distribution
for observed moments.  The Bayesian formulation yields to a doubly-intractable target.
For its solution we use de-biased SMC sampler estimators of the log-likelihood within a SGLD algorithm
to construct consistent posterior expectations. 
All the algorithms are provably convergent, and numerical simulations are provided using (classically) simulated 
data from a toy model. 
This exercise reveals that the posterior uncertainty in the distribution can be significantly amplified with respect to the 
uncertainty in the moments arising from having finitely many observations.
Topics for further research include: 
(0) the extension of the current framework to density matrices,
(i) the incorporation of an explicit error distribution for the observations within the model, 
(ii) exploration of more complex targets, i.e. more qubits, (iii) MLMC acceleration of SGLD, 
(iv) investigation of PDMP MCMC methods as alternatives to SGLD, 
(v) other debiasing strategies and opportunites for acceleration, 
and of course (vi) investigation of outer problems, such as model selection.

\vspace{10pt}
\noindent 
{{\bf Acknowledgements}: This work is supported by the U.S. Department of Energy, Office of Science, Office of Advanced Scientific Computing Research (ASCR) quantum algorithm teams program, under field work proposal number ERKJ333.}

\bibliographystyle{plain} 
\bibliography{references}

\appendix

\section{}%Proofs}
\label{app:1}

The proof of Proposition \ref{prop:unbiasedunno} is given below.

\begin{proof}

Define 
\begin{equation}\label{eq:phi}
\Phi_j(\eta) := \frac{\eta(G_{j-1} M_j)}{\eta(G_{j-1})} \, ,
\end{equation}
and observe that the iterates of the algorithm of section \ref{sec:smc} can be rewritten in the concise form
$$
x^i_j \sim \Phi_j(\eta^N_{j-1}) \,  , \quad {\rm for} ~~ i=1,\dots,N \, ,
$$
where we recall the definition \eqref{eq:empirical} for the empirical measure $\eta^N_{j-1}$.

Notice that 
\begin{equation}\label{eq:unnorm}
\Gamma_J(\varphi)  = 
\int_{E^{d\times (J+1)}} \Gamma_0(dx_0) G_0(x_0) M_1(x_0,dx_1) \dots G_{J-1}(x_{J-1}) M_{J}(x_{J-1}, dx) \varphi(x) \, . %dx_{0:J-1}
\end{equation}

By the law of iterated expectations we have
\begin{eqnarray}\label{eq:itex}
\bbE \prod_{j=0}^{J-1} \eta^N_{j}(G_{j}) \eta_J^N(\varphi) & = & 
 \bbE \left[  \bbE \left[ \prod_{j=0}^{J-1} \eta^N_{j}(G_{j}) \eta_J^N(\varphi) \Bigg | x_{0:J-1}^{1:N} \right] \right] \, \\
& = & 
 \bbE \left[  \prod_{j=0}^{J-1} \eta^N_{j}(G_{j}) \bbE \left[  \eta_J^N(\varphi) \Big | x_{0:J-1}^{1:N} \right] \right] \, \\
& = & \bbE \left[ \prod_{j=0}^{J-1} \eta^N_{j}(G_{j}) \Phi_J(\eta_{J-1}^N)(\varphi) \right] \, \\
& = & \bbE \left[ \prod_{j=0}^{J-2} \eta^N_{j}(G_{j}) \eta_{J-1}^N(G_{J-1}M_J \varphi) \right] \, .
\end{eqnarray}
Iterating in this way, it is clear that
$$
\bbE \prod_{j=0}^{J-1} \eta^N_{j}(G_{j}) \eta_J^N(\varphi) = \eta_0(G_0 M_1 \cdots G_{J-1} M_J \varphi) \, .
$$
Noting that $\Gamma_0=\eta_0$, the right-hand side is exactly \eqref{eq:unnorm}.

%\begin{eqnarray}
%\bbE\left[  \prod_{j=0}^{J-1} \eta^N_{j}(G_{j}) \eta_J^N(\varphi) \Bigg | x_{0:J-1}^{1:N} \right] & = & 
%
%\end{eqnarray}

\end{proof}

\end{document}